\newtheorem{Theorem}{Theorem}[section]
\newtheorem{Lemma}[Theorem]{Lemma}
\newtheorem{Proposition}[Theorem]{Proposition}
\newtheorem{Definition}[Theorem]{Definition}
\newtheorem{Remark}[Theorem]{Remark}
\numberwithin{equation}{section}
\title{\bf Virtual Reshaping and Invisibility in Obstacle Scattering}
\author{Hongyu Liu\thanks{Department of Mathematics,
University of Washington, Box 354350, Seattle, WA 98195, USA. ({\tt
hyliu@math.washington.edu})}}
\begin{document}

\date{}

\maketitle

\begin{abstract}
We consider reshaping an obstacle virtually by using transformation
optics in acoustic and electromagnetic scattering. Among the general
virtual reshaping results, the virtual minification and virtual
magnification are particularly studied. Stability estimates are
derived for scattering amplitude in terms of the diameter of a small
obstacle, which implies that the limiting case for minification
corresponds to a perfect cloaking, i.e., the obstacle is invisible
to detection.
\end{abstract}

\section{Introduction}

Since the pioneering work on transformation optics and cloaking
\cite{GLU,GLU2,Leo,PenSchSmi}, there is an avalanche of study on
designs of various striking cloaking devices; e.g., invisibility
cloaking devices \cite{GKLU3,KSVW}; field rotators \cite{CheCha};
concentrators \cite{LCZRK}; electromagnetic wormholes
\cite{GKLU1,GKLU2}; superscatterers \cite{YCLM}, etc.. We refer to a
most recent survey paper \cite{GKLU4} for a comprehensive review and
related literature. The crucial observation is that certain PDEs
governing the wave phenomena are form-invariant under
transformations, e.g., Hemholtz equation for acoustic scattering and
Maxwell's equations for electromagnetic scattering. Hence, one could
form new acoustic or EM material parameters (in the physical space)
by pushing forward old ones (in the virtual space) via a mapping
$F$. Such materials/media are called \emph{transformation media}
\cite{PenSchSmi}. It turns out that the wave solutions in the
virtual space with the old material parameters and in the physical
space with the new material parameters are also related by the
push-forward $F$. Those key ingredients pave the way for the design
of optical devices with customized effects on wave propagation.

In this paper, we shall be concerned with cloaking devices for
acoustic and electromagnetic obstacle scattering. As is known, there
are two types of scatterers which are under wide study for acoustic
and electromagnetic scattering, namely, the {\it penetrable medium}
and the {\it impenetrable obstacle}. For a medium, the acoustic or
EM wave can penetrate inside, and basically the medium accounts for
the coefficients in the governing PDEs. Whereas for an obstacle, the
acoustic or EM wave cannot penetrate inside and only exists in the
exterior of the object, and the obstacle is related to the domain of
definitions for the governing PDEs. The cloakings for acoustic or EM
media have been extensively studied in transformation optics in
existing literature and the theory has been well-established, we
again refer to the review paper \cite{GKLU4} for related discussion.
For our current study, the cloakings for obstacles are considered
and it is shown that the domain of definitions for certain PDEs can
also be pushed forward under transformations. Using the
transformation optics, one can push forward an obstacle in the
virtual space to form a different obstacle in the physical space,
and the ambient space around the virtual obstacle is then pushed
forward to a cloaking medium around the physical obstacle. With a
suitable push-forward $F$, it is shown that the scattering amplitude
in the physical space coincides with that in the virtual space. That
is, if one intends to recover the physical obstacle after being
cloaked by the corresponding scattering measurements, then the
reconstruction will give the image of the obstacle in the virtual
space, but not the physical one, namely, the physical obstacle is
virtually reshaped with the cloaking. Principally, it has been shown
that one can achieve any desired virtual reshaping effect provided
an appropriate transformation $F$ can be found between the virtual
space and the physical space.

Particularly, we consider virtually magnifying and minifying an
obstacle. By magnification, we mean that the size of the virtual
obstacle is larger than that of the underlying physical one. That
is, under acoustic and EM wave detection, the cloaking makes the
obstacle look bigger than its original size. Whereas by
minification, we actually mean virtually shrinking the obstacle,
that is, the size of the virtual obstacle is smaller than that of
the physical one. In the limiting case of minification, the virtual
obstacle collapses to a single point, and this formally corresponds
to a perfect cloaking, namely, the physical obstacle becomes
invisible to detection. We note that in this case, the push-forward
$F$ blows up a single point in the virtual space to a `hole' (which
actually is the physical obstacle) in the physical space. Hence, the
map $F$ is intrinsically singular, and the obtained transformation
medium is inevitably singular. Correspondingly, the transformed PDEs
in the physical space are no longer uniformly elliptic which also
becomes singular. Therefore, in order to rigorously justify the
perfect cloaking, we need to deal with the singular PDEs. Basically,
one would encounter the same problems in treating perfect cloakings
for acoustic or EM medium and several approaches are proposed to
deal with such singularities. For perfect cloaking of conductivity
equation, which can be considered as optics at zero frequency, the
invisibility is mathematically justified in \cite{GLU2} by using the
removability of point singularities for harmonic functions; whereas
an alternative treating is provided in \cite{KSVW}, where {\it
near-invisibility} is introduced from a regularization viewpoint and
the invisibility is rigorously justified based on certain stability
estimates for conductivity equation with small inclusions. For the
finite frequency cases, a novel notion of \emph{finite energy
solutions} is introduced in \cite{GKLU} and the invisibility
cloaking of acoustic and electromagnetic medium are then justified
directly. For the perfect cloaking of obstacles considered in the
present paper, we shall follow the approach in \cite{KSVW} to
mathematically justify the invisibility by taking limit of
near-invisibility. To that end, we derive certain stability
estimates for scattering amplitudes in terms of the diameter of a
small obstacle in both acoustic and EM scattering. Those stability
estimates are then used to show that the limiting process of
minification cloaking corresponds to a process of near-invisibility
cloaking, which in turn implies the desired invisibility result of
the perfect cloaking. For practical considerations, all our
reshaping studies are conducted within multiple scattering, that is,
there is more than one obstacle component included.

Finally, we would like to mention some unique determination results
in inverse obstacle scattering, where one utilizes acoustic or
electromagnetic scattering measurements to identify an
unknown/inaccessible obstacle. The uniqueness/identifiability
results correspond to circumstances under which one cannot virtually
reshape an obstacle. In the case that the obstacle is situated in a
homogeneous background medium, the uniqueness theory for inverse
obstacle scattering is relatively well established, and we refer to
\cite{LiuZou} for a survey and relevant literature. Whereas in
\cite{KirPai},\cite{Liu},\cite{NPT}, the recovery of an obstacle
included in certain inhomogeneous (isotropic) medium is considered.
It is shown in \cite{KirPai} and \cite{NPT} that if the isotropic
medium is known a priori, then the included obstacle is uniquely
determined by the associated scattering amplitude. Under the
assumption that the isotropic medium and the included obstacle has
only planar contacts, it is proved in \cite{Liu} that one can
recover both the medium and the obstacle by the associated
scattering amplitude. The argument in \cite{Liu} also implies that
an obstacle surrounded by an isotropic medium cannot produce the
same scattering amplitude as another pure obstacle. This result
essentially indicate that transformation media for virtually
reshaping an obstacle must be anisotropic.

The rest of the paper is organized as follows. In
Section~\ref{sect:acoustic reshaping}, we consider the reshaping for
acoustic scattering, where virtual minification and magnification
are first considered consecutively, and then we present a general
reshaping result. Similar study has been conducted for reshaping a
EM perfectly conducting obstacle in Section~\ref{sect:em reshaping}.

\section{Virtual reshaping for acoustic scattering}\label{sect:acoustic
reshaping}

\subsection{The Helmholtz equation}\label{subset:11}

Let $M$ be an open subset of $\mathbb{R}^3$ with Lipschitz
continuous boundary $\partial M$ and connected complement
$M^+:=\mathbb{R}^3\backslash \overline{M}$. Let $(M^+, g)$ be a
Riemannian manifold such that $g$ is Euclidean outside of a
sufficiently large ball $B_R$ containing $M$. Here and in the
following, $B_R$ shall denote an Euclidean ball centered at origin
and of radius $R$. In wave scattering, $M$ denotes an impenetrable
obstacle and the Remannian metric $g$ corresponds to the surrounding
medium with the Euclidean metric $g_0:=\delta_i^j$ representing the
vacuum. In acoustic scattering, $\sigma=(\sigma^{ij})_{i,j=1}^3$
with $\sigma^{ij}:=\sqrt{|g|}g^{ij}$ is the anisotropic acoustic
density and $\sqrt{|g|}=|\sigma|$ is the bulk modulus, where
$(g^{ij})_{i,j=1}^3$ is the matrix inverse of the matrix
$(g_{ij})_{i,j=1}^3$, and $|g|=\det g, |\sigma|=\det \sigma$.
Formally, we have the following one-to-one correspondence between a
material parameter tensor and a Riemannian metric
\begin{equation}\label{eq:relation}
\sigma^{ij}=|g|^{1/2}g^{ij}\quad \mbox{or}\quad
g^{ij}=|\sigma|^{-1}\sigma^{ij}.
\end{equation}

We consider the scattering for a time-harmonic plane incident wave
$u^i=\exp\{i k x\cdot\theta\}$, $\theta\in\mathbb{S}^{2}$ due to the
obstacle $M$ together with the surrounding medium $(M^+, g)$. The
total wave field is governed by the Helmholtz equation
\begin{align}
& \Delta_g u+k^2 u=0\quad \mbox{in $M^+$},\label{eq:Helmholtz}\\
& u|_{\partial M}=0,\label{eq:Dirichlet}
\end{align}
where the Laplace-Beltrami operator associated with $g$ is given in
local coordinates by
\[
\Delta_g u=\frac{1}{\sqrt{g}}\sum_{i,j=1}^3\frac{\partial}{\partial
x_i}\left(\sqrt{|g|}g^{ij}\frac{\partial u}{\partial x_j}\right).
\]
The homogeneous Dirichlet boundary condition (\ref{eq:Dirichlet})
means that the wave pressure vanishes on the boundary of the
obstacle. $M$ is usually referred to as a \emph{sound-soft}
obstacle. The scattered wave field is as usual assumed to satisfy
the \emph{Sommerfeld radiation condition}. Taking advantage of the
one-to-one correspondence (\ref{eq:relation}) between (positive
definite) acoustic densities $\sigma$ and Riemannian metrics $g$, we
proceed to mention a few facts about the form-invariance of the
Hemholtz equation under transformations. For a smooth diffeomorphism
$F:=\Omega_1\rightarrow \Omega_2$, $y=F(x)$, the metric $g(x)$
transforms as a covariant symmetric 2-tensor,
\begin{equation}\label{eq:pushforward}
\tilde{g}_{ij}(y):=(F_* g)_{ij}(y)=\sum_{l,m=1}^{3}\frac{\partial
x^l}{\partial y^i}\frac{\partial x^m}{\partial y^j}
g_{lm}\bigg|_{x=F^{-1}(y)},
\end{equation}
and then, for $u=\tilde{u}\circ F$, we have
\begin{equation}
(\Delta_g+k^2)u=0 \Longleftrightarrow
(\Delta_{\tilde{g}}+k^2)\tilde{u}=0.
\end{equation}
Alternatively, using (\ref{eq:relation}), one could work with the
Helmholtz equation of the following form
\begin{equation}\label{eq:Helmholtz II}
\sum_{i,j=1}^{3}\partial_i(\sigma^{ij}\partial_j u)+k^2 |\sigma|
u=0,
\end{equation}
and then, for $u=\tilde{u}\circ F$, we have
\begin{equation}\label{eq:Helmholtz III}
\sum_{i,j=1}^{3}\partial_i({\tilde\sigma}^{ij}\partial_j u)+k^2
|\tilde\sigma| u=0.
\end{equation}
Here $\tilde{\sigma}$ is the push-forward of $\sigma$ which, by
using (\ref{eq:relation}) and (\ref{eq:pushforward}), can be readily
shown to be given by
\begin{equation}\label{eq:trans sigma}
\tilde{\sigma}=F_* \sigma:=\left(\frac{(D F)^T\cdot \sigma\cdot (D
F)}{|\det D F|}\right)\circ F^{-1},
\end{equation}
where $D F$ denotes the (matrix) differential of $F$ and $(D F)^T$
its transpose.

Throughout, we shall work with $\sigma\in L^\infty(M^+)^{3\times 3}$
and $F$ is orientation-preserving, invertible with both $F$ and
$F^{-1}$ (uniformly) Lipschitz continuous over $M^+$. So, it is
appropriate to work with the following Sobolev space for the
scattering solution to (\ref{eq:Helmholtz})-(\ref{eq:Dirichlet}),
\[
H_{loc}^1(M^+)=\{u\in\mathscr{D}'(M^+); u\in H^1(M^+\cap B_\rho)\ \
\mbox{for each finite $\rho$ with $M\subset B_\rho$}\}.
\]
The system (\ref{eq:Helmholtz})-(\ref{eq:Dirichlet}), or
(\ref{eq:Helmholtz II}) and (\ref{eq:Dirichlet}) is well-posed and
has a unique solution $u\in H_{loc}^1(M^+)$ (see \cite{Mcl}). Noting
that the corresponding metric outside a ball $B_{R}\supset M$ is
Euclidean, we know $u$ is smooth outside $B_{R}$. Furthermore, the
solution $u(x,k,\theta)$ admits asymptotically as $|x|\rightarrow
+\infty$ the development (see \cite{ColKre})
\begin{equation}
u(x,\theta,k)=e^{ik x\cdot \theta}+\frac{e^{ik |x|}}{|x|}
A(\theta',\theta,k)+\mathcal{O}(\frac{1}{|x|^2}),
\end{equation}
where $\theta'=x/|x|\in\mathbb{S}^{2}$. The analytic function
$A(\theta',\theta,k)$ is known as the \emph{scattering amplitude} or
\emph{far-field pattern}. According to the celebrated Rellich's
theorem, there is a one-to-one correspondence between the scattering
amplitude $A(\theta',\theta,k)$ and the wave solution
$u(x,\theta,k)$. Throughout, we consider the scattering amplitude
for the virtual reshaping effects.

We shall denote by $M\bigoplus (M^+,g)$ a cloaking device with an
obstacle $M$ and the corresponding cloaking medium $(M^+,g)$. The
metric $g$ is always assumed to be Euclidean outside a sufficiently
large ball containing $M$, namely, the cloaking medium is compactly
supported. If we know the support of the cloaking medium, say
$M'\backslash \overline{M}$, we also write $M\bigoplus(M'\backslash
\overline{M}, g)$ to denote the cloaking device.

\begin{Definition}\label{def:acoustic}
We say that $M\bigoplus (M^+,g)$ (virtually) reshapes the obstacle
$M$ to another obstacle $\widetilde{M}$, if the scattering
amplitudes coincide for $M\bigoplus (M^+,g)$ and $\widetilde{M}$,
i.e.
\[
A(\theta',\theta,k;M\bigoplus (M^+,g))=A(\theta',\theta,k;
\widetilde{M}).
\]
\end{Definition}

We would like to remark that according to the correspondence
(\ref{eq:relation}), the cloaking device in
Definition~(\ref{def:acoustic}) can also be written as $M\bigoplus
(M^+, \sigma)$, where $\sigma$ is the (anisotropic) acoustic density
for the cloaking medium.

\subsection{Virtual minification by cloaking}

We first consider the reshaping effects for a special class of
obstacles, which are star-shaped and referred to as {\it $l^p$-ball
shaped obstacles} in the following. They are domains in
$\mathbb{R}^3$ of the form
\[
\{x\in\mathbb{R}^3; \|x\|_{p}=r\},
\]
where $p\in [1,+\infty]$, $r>0$ is a constant and for
$x=(x_1,x_2,x_3)$
\[
\|x\|_p=\left(\sum_{i=1}^{3} |x_i|^p\right)^{1/p}.
\]
Obviously, $\|\cdot\|_2=|\cdot|$ and an $l^2$-ball is exactly an
Euclidean ball. For $l^p$-ball shaped obstacles, we can give the
transformation rule explicitly and correspondingly, the cloaking
material parameters for those obstacles can be derived explicitly.
Henceforth, we write $B_{R,p}$ to denote an $l^p$-ball of radius $R$
and centered at origin, whereas as prescribed earlier, we write
$B_{R}:=B_{R,2}$. We also denote by $M^+$ the complement of a domain
$M$ in $\mathbb{R}^3$.

Let $M=B_{R_1,p}$ with $R_1>0$. Let $R_0, R_2$ be such that
$0<R_0<R_1<R_2$. We define the map, $F: B_{R_0,p}^+\mapsto
B_{R_1,p}^+$ by
\begin{equation}\label{eq:1}
x:=F(y)=\begin{cases} \qquad y,\ &\mbox{for $\|y\|_p\geq R_2$},\\
(\frac{R_1-R_0}{R_2-R_0}R_2+\frac{R_2-R_1}{R_2-R_0}\|y\|_p)\frac{y}{\|y\|_p},\
\ & \mbox{for $R_0<\|y\|_p<R_2$}.
\end{cases}
\end{equation}
It is noted that $F$ is (uniformly) Lipschitz continuous over
$B_{R_0,p}^+$ and maps $B_{R_2,p}\backslash B_{R_0,p}$ to
$B_{R_2,p}\backslash B_{R_1,p}$. For $R_1<\|x\|_p<R_2$, let
\begin{equation}\label{eq:1.6}
g_1(x)=(F_*g_0)(x).
\end{equation}
\begin{figure}
\centering
  \includegraphics[width=9cm]{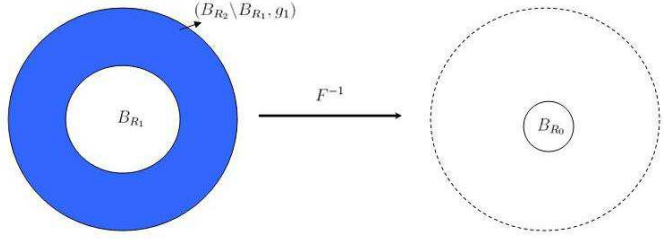}\\
  \caption{Illustration for minification: the cloaking device $B_{R_1}\bigoplus (B_{R_1}^+, g)$
  virtually reshapes $B_{R_1}$ to $B_{R_0}$.}\label{fig:1}
\end{figure}
\noindent Set
\begin{equation}\label{eq:medium1}
g(x)=\begin{cases} g_1(x),\ &\mbox{for $R_1<\|x\|_p<R_2$},\\
g_0(x), & \mbox{for $\|x\|_p\geq R_2$}.
\end{cases}
\end{equation}
We have
\begin{Proposition}\label{prop:1}
The cloaking device $B_{R_1,p}\bigoplus (B_{R_1,p}^+, g)$ with $g$
defined in (\ref{eq:medium1}), reshapes $B_{R_1,p}$ virtually to
$B_{R_0,p}$. That is, the physical obstacle $B_{R_1,p}$ with the
cloaking material $(B_{R_2,p}\backslash B_{R_1,p},g_1)$ is virtually
minified to the obstacle $B_{R_0,p}$ with a minification ratio
$\kappa:=R_0/R_1$ (see Fig.~\ref{fig:1} for a schematic
illustration).
\end{Proposition}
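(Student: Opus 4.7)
The plan is to directly apply the transformation-optics template established in Subsection~\ref{subset:11} to the explicit Lipschitz diffeomorphism $F$ in (\ref{eq:1}). Let $\tilde u(y,\theta,k)$ denote the unique $H^1_{\mathrm{loc}}(B_{R_0,p}^+)$ solution of the sound-soft scattering problem for the ``virtual'' obstacle $B_{R_0,p}$ sitting in vacuum, with incident wave $e^{ik\,y\cdot\theta}$ and satisfying the Sommerfeld condition at infinity. Define
\[
u(x,\theta,k) := \tilde u\bigl(F^{-1}(x),\theta,k\bigr) \quad \text{in } B_{R_1,p}^+ .
\]
I will then verify that $u$ is the unique scattering solution of the physical problem for $B_{R_1,p}\bigoplus (B_{R_1,p}^+,g)$, and that $u$ and $\tilde u$ agree outside $B_{R_2,p}$, which forces their far-field patterns to coincide.

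First I check the geometry of $F$: a short computation with (\ref{eq:1}) shows that $F$ fixes the sphere $\|y\|_p=R_2$, sends $\|y\|_p=R_0$ to $\|x\|_p=R_1$, and is the identity on $\{\|y\|_p\geq R_2\}$. Thus $F$ is an orientation-preserving bi-Lipschitz homeomorphism between $B_{R_0,p}^+$ and $B_{R_1,p}^+$ which is smooth away from the two level sets $\{\|\cdot\|_p=R_0\}$ and $\{\|\cdot\|_p=R_2\}$. This places us in exactly the regularity setting assumed after (\ref{eq:trans sigma}), so the push-forward $F_*g_0$ (equivalently $F_*\sigma_0$) defines an $L^\infty$ tensor on the annulus and the well-posedness of the Helmholtz problem in $H^1_{\mathrm{loc}}$ stated in Subsection~\ref{subset:11} applies.

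Next I use the form-invariance (\ref{eq:Helmholtz II})--(\ref{eq:Helmholtz III}). With $g$ as defined by (\ref{eq:medium1}), and noting that at the interface $\|x\|_p=R_2$ the push-forward $F_*g_0$ matches $g_0$ (since $F$ is the identity there), $g$ is a well-defined $L^\infty$ metric on $B_{R_1,p}^+$. By the transformation identity, $u$ satisfies $(\Delta_g+k^2)u=0$ weakly in $B_{R_1,p}^+$. The Dirichlet condition $u|_{\partial B_{R_1,p}}=0$ follows from $\tilde u|_{\partial B_{R_0,p}}=0$ together with $F(\partial B_{R_0,p})=\partial B_{R_1,p}$. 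Since $F$ is the identity on $\{\|y\|_p\geq R_2\}$, we have $u(x)=\tilde u(x)$ for $\|x\|_p\geq R_2$; in particular $u$ inherits the incident wave $e^{ikx\cdot\theta}$ and the Sommerfeld radiation condition in the exterior region, so its scattered part is uniquely determined.

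By the uniqueness of the $H^1_{\mathrm{loc}}$ scattering solution (\cite{Mcl}, cited in Subsection~\ref{subset:11}), $u$ is the scattering solution for the cloaking device $B_{R_1,p}\bigoplus (B_{R_1,p}^+,g)$. Because $u\equiv\tilde u$ outside $B_{R_2,p}$, their large-$|x|$ asymptotic expansions coincide term by term, which yields $A(\theta',\theta,k;B_{R_1,p}\bigoplus (B_{R_1,p}^+,g))=A(\theta',\theta,k;B_{R_0,p})$, that is, Definition~\ref{def:acoustic} with $\widetilde M=B_{R_0,p}$. The main technical point to watch is the merely Lipschitz regularity of $F$ across the two interfaces $\|x\|_p=R_1$ and $\|x\|_p=R_2$: one must apply the transformation rule (\ref{eq:pushforward})--(\ref{eq:Helmholtz III}) in its weak/$H^1$ form rather than classically, so that no spurious interface distributions arise; this is legitimate precisely because $F$ and $F^{-1}$ are uniformly Lipschitz, which is the standing assumption of the paper.
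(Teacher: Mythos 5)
Your proof is correct and follows essentially the same route as the paper: push the virtual scattering solution for $B_{R_0,p}$ forward under $F$, verify that the result satisfies the transformed Helmholtz equation, the Dirichlet condition on $\partial B_{R_1,p}$, and (since $F=\mathrm{id}$ for $\|y\|_p\geq R_2$) the radiation condition, then invoke uniqueness to identify it with the cloaked solution and read off the equality of far-field patterns. Your version is somewhat more explicit about the geometry of $F$ and the Lipschitz-interface regularity, but there is no substantive difference in the argument.
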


\begin{proof}
Let $u(y)\in H_{loc}^1(B_{R_0,p}^+)$ be the unique solution to the
Hemholtz equation (\ref{eq:Helmholtz})-(\ref{eq:Dirichlet})
corresponding to the obstacle $B_{R_0,p}$. Whereas, we let $v(x)\in
H_{loc}^1(B_{R_1,p}^+)$ be the unique solution to the Helmholtz
equation (\ref{eq:Helmholtz})-(\ref{eq:Dirichlet}) corresponding to
$B_{R_1,p}\bigoplus (B_{R_1,p}^+,$ $ g)$. Define $\tilde{u}(x), x\in
B_{R_1,p}^+,$ be such that $u=\tilde{u}\circ F$, i.e. $\tilde{u}=F_*
u=(F^{-1})^*u$. It is clear that $\tilde u\in
H_{loc}^1(B_{R_1,p}^+)$ since $F$ is bijective and both $F$ and
$F^{-1}$ are (uniformly) Lipschitz continuous. Moreover, noting
$F(\partial B_{R_0,p})=\partial B_{R_1,p}$, we know
$\tilde{u}|_{\partial B_{R_1,p}}=0$.

By the invariance of Helmholtz equation under transformation, it is
readily seen that $\tilde{u}=v$. Hence,
\begin{equation}\label{eq:equal 1}
A(\theta',\theta,k;B_{R_1,p}\bigoplus (B_{R_1,p}^+,
g))=A(\theta',\theta,k;B_{R_0,p}).
\end{equation}
\end{proof}

For an Euclidean ball $B_{R_0}\subset\mathbb{R}^3$, by separation of
variables, we have
\begin{equation}
A(\theta',\theta,k;B_{R_0})=\frac{i}{k}\sum_{n=0}^{\infty}(2n+1)\frac{j_n(k
R_0)}{h_n^{(1)}(k R_0)}P_n(\cos\psi),
\end{equation}
where $j_n(t)$ and $h_n^{(1)}(t)$ are respectively, the $n$-th order
spherical Bessel function and spherical Hankel function of first
kind, $P_n(t)$ is the Legendre polynomial and $\psi=\angle
(\theta,\theta')$.
Using the asymptotical properties
\begin{align*}
j_n(t)=\mathcal{O}(t^n),\ \ h_n^{(1)}(t)=\mathcal{O}(t^{-n-1}),\ \
n=0,1,\ldots,\ \ \mbox{as\ $t\rightarrow +0$},
\end{align*}
it is straightforward to show
\begin{equation}\label{eq:asym 1}
A(\theta',\theta,k;B_{R_0})=\mathcal{O}(R_0)\quad \mbox{as
$R_0\rightarrow +0$,}
\end{equation}
Now we consider the limiting case for minification, namely
$\kappa\rightarrow +0$ or equivalently $R_0\rightarrow +0$.
By (\ref{eq:equal 1}) and (\ref{eq:asym 1}),
\begin{equation}\label{eq:stability 1}
A(\theta',\theta,k;B_{R_1}\bigoplus (B_{R_1}^+,
{g}))=A(\theta',\theta,k; B_{R_0})=\mathcal{O}(R_0)
\end{equation}
as $R_0\rightarrow +0$. That is,
\begin{Proposition}\label{prop:invisible}
The limit for minification of an Euclidean ball $B_{R_1}$ in
Proposition~\ref{prop:1} gives a perfect cloaking, namely, it makes
the obstacle invisible to detection.
\end{Proposition}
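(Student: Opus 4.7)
The plan is to read Proposition \ref{prop:invisible} as the limiting case $\kappa = R_0/R_1 \to 0^+$ of the minification cloakings constructed in Proposition \ref{prop:1}, following the regularization strategy of \cite{KSVW} outlined in the introduction. That is, I would view ``perfect cloaking'' not as a statement about a single (necessarily singular) limiting medium, but as the statement that the family of regular, non-singular cloaking devices $B_{R_1}\bigoplus(B_{R_1}^+, g^{(R_0)})$ produces scattering amplitudes that tend to zero as $R_0\to 0^+$, uniformly in the measurement data.

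The execution would be very short, since all the real work has been assembled already. First, for each $R_0\in(0,R_1)$, construct the cloaking metric $g^{(R_0)}$ via (\ref{eq:1})--(\ref{eq:medium1}); Proposition \ref{prop:1} then gives the exact identity $A(\theta',\theta,k; B_{R_1}\bigoplus(B_{R_1}^+, g^{(R_0)})) = A(\theta',\theta,k; B_{R_0})$. Second, invoke the separation-of-variables expansion together with the small-argument asymptotics $j_n(t)=\mathcal{O}(t^n)$ and $h_n^{(1)}(t)=\mathcal{O}(t^{-n-1})$ to obtain $A(\theta',\theta,k;B_{R_0}) = \mathcal{O}(R_0)$ (the $n=0$ mode dominates, since $j_0(kR_0)/h_0^{(1)}(kR_0)=\mathcal{O}(R_0)$, while higher modes decay as $\mathcal{O}(R_0^{2n+1})$). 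Chaining these gives exactly (\ref{eq:stability 1}).

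Finally, letting $R_0 \to 0^+$ in (\ref{eq:stability 1}) forces the scattering amplitude of the cloaking device to vanish uniformly in $(\theta',\theta)\in\mathbb{S}^2\times\mathbb{S}^2$ and on compact sets of frequencies $k$. By Rellich's theorem, cited in Section \ref{subset:11}, a vanishing far-field pattern forces the scattered wave to vanish identically in the unbounded component of the exterior, so the total field outside the cloaking region reduces to the incident plane wave $e^{ikx\cdot\theta}$. This is precisely the statement that the obstacle is invisible to far-field probing, establishing the proposition.

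The only genuine subtlety, and the reason a proof is needed at all rather than a direct push-forward argument, is that the map $F$ in (\ref{eq:1}) degenerates as $R_0\to 0^+$: it blows up the single point $\{0\}$ in the virtual space to the hole $B_{R_1}$ in physical space, so the limiting cloaking metric is singular and the limiting PDE is no longer uniformly elliptic. One cannot apply Proposition \ref{prop:1} directly at $R_0=0$. The stability estimate (\ref{eq:asym 1}) is exactly the quantitative input that justifies bypassing this singular limit and defining invisibility as the limit of near-invisibilities, in accordance with the framework adopted in this paper.
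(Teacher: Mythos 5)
Your proposal is correct and mirrors the paper's argument exactly: chain the identity $A(\theta',\theta,k;B_{R_1}\bigoplus(B_{R_1}^+,g))=A(\theta',\theta,k;B_{R_0})$ from Proposition~\ref{prop:1} with the small-ball asymptotics $A(\theta',\theta,k;B_{R_0})=\mathcal{O}(R_0)$ from the Bessel expansion to obtain (\ref{eq:stability 1}), then read perfect cloaking as the $R_0\to 0^+$ limit of near-invisibility in the sense of \cite{KSVW}. Your closing remarks on Rellich's theorem and on why the singular map $H$ at $R_0=0$ forces a regularization argument are consistent elaborations of what the paper says in the surrounding text.
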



In the limiting case with $\kappa=0$, the transformation in
(\ref{eq:1}) becomes
\begin{equation}\label{eq:ideal trans mini}
x=H(y):=\begin{cases} \quad y,\quad \mbox{for $\|y\|_p\geq R_2$},\\
(R_1+\frac{R_2-R_1}{R_2}\|y\|)\frac{y}{\|y\|_p}, &\mbox{for
$0<\|y\|_p<R_2$},
\end{cases}
\end{equation}
which maps $\mathbb{R}^3\backslash\{0\}$ to $\mathbb{R}^3\backslash
B_{R_1,p}$, i.e., it blows up the single point $\{0\}$ to
$B_{R_1,p}$. It is remarked that the map $H$ in (\ref{eq:ideal trans
mini}) with $p=2$ is exactly the one used in \cite{GLU,GLU2} for
perfect cloaking of conductivity equation, and in \cite{PenSchSmi}
for perfect cloaking of electromagnetic material tensors. Next, we
take the case with $p=2$ as an example for a simple analysis of the
perfect cloaking medium. The corresponding metric $(H_*g_0)(x)$ in
$B_{R_2}\backslash\bar{B}_{R_1}$ is singular near the cloaking
interface, namely $\partial B_{R_1}$. In fact, considering in the
standard spherical coordinates on $B_{R_2}\backslash \{0\}$,
$(r,\phi,\theta)\mapsto (r\sin\theta\cos\phi,
r\sin\theta\sin\phi,r\cos\theta)\in\mathbb{R}^3$ and by
(\ref{eq:pushforward}), it can be easily calculated that
\[
\tilde g:=H_*g_0=\left(
      \begin{array}{ccc}
        \lambda^2 & 0 & 0 \\
        0 & \lambda^2(r-R_1)^2 & 0 \\
        0 & 0 & \lambda^2(r-R_1)^2\sin^2\theta \\
      \end{array}
    \right),
\]
where $\lambda=R_2/(R_2-R_1)$. That is, $\tilde{g}$ has one
eigenvalue bounded from below (with eigenvector corresponding to the
radial direction) and two eigenvalues of order $(r-R_1)^2$
approaching zero as $r\rightarrow +R_1$. Hence, if the perfect
cloaking is analyzed directly, one needs to deal with the
degenerated elliptic equation near the cloaking interface. So, a
suitable choice of the class of \emph{weak} solutions to the
singular equation must be purposely introduced, as the \emph{finite
energy solutions} considered in \cite{GKLU} for invisibility
cloaking devices of acoustic and electromagnetic media. Clearly, our
earlier analysis on the perfect cloaking of an Euclidean ball avoid
singular equation by taking limit. This is similar to \cite{KSVW}
for the analysis of perfect cloaking of conductivities in electrical
impedance tomography by regularization. Here we would like to point
out that there is no theoretical result available showing that the
limit of the regularized solutions obtained by the approach of the
current paper by sending $\kappa\rightarrow 0$ are the \emph{finite
energy solutions} in the sense of \cite{GKLU}. A further study in
this aspect may provide more insights into the invisibility
cloaking.

In order to achieve the similar invisibility result for a general
$l^p$-ball shaped obstacles, we need to derive stability estimates
similar to (\ref{eq:stability 1}) for generally shaped obstacles
with small diameters. This is given by
Lemma~\ref{lem:justification1} below, proved using boundary integral
representation rather than separation of variables, and the
obstacles could be generally star-shaped. On the other hand, from a
practical viewpoint, we consider the scattering with multiple
scattering components and only some of the components are cloaked.
We shall show that the virtual reshaping takes effect only for those
cloaked components and the other uncloaked components remain
unaffected. Particularly, those perfectly cloaked components will be
invisible, even though there is scattering interaction between the
obstacle components. We are now in a position to present the key
lemma. In the sequel, we let $\mathbb{B}$ be a simply connected set
in $\mathbb{R}^3$ whose boundary is star-shaped with respect to the
origin of the form $\partial \mathbb{B}= \delta r_0(\theta)\theta$,
where $\theta\in\mathbb{S}^2, r_0(\theta)\in C^2(\mathbb{S}^2)$ and
$\delta>0$. Let $\mathbb{B}_0$ be the domain $\{x\in\mathbb{R}^3;
|x|<r_0(\theta)\}$.

\begin{Lemma}\label{lem:justification1}
Suppose $M_1\cap \mathbb{B}_0=\emptyset$, then we have
\begin{equation}\label{eq:est1}
A(\theta',\theta, k;M_1\cup \mathbb{B})=A(\theta',\theta,k;
M_1)+\mathcal{O}(\delta)\quad \mbox{as $\delta\rightarrow +0$}.
\end{equation}
\end{Lemma}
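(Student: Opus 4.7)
The plan is to isolate the perturbation caused by $\mathbb{B}$ and bound the resulting far-field contribution via boundary integral methods. Let $u = u(\cdot,\theta,k;M_1\cup \mathbb{B})$ and $u_1 = u(\cdot,\theta,k;M_1)$ be the unique $H_{loc}^1$ scattering solutions, and set $w := u - u_1$. Since $M_1 \cap \mathbb{B}_0 = \emptyset$, for all sufficiently small $\delta$ one has $\mathbb{B} \subset \mathbb{R}^3 \setminus \overline{M_1}$, and subtraction gives
\[
(\Delta + k^2)w = 0 \text{ in } (M_1\cup \mathbb{B})^+, \qquad w|_{\partial M_1} = 0, \qquad w|_{\partial \mathbb{B}} = -u_1|_{\partial \mathbb{B}},
\]
together with the Sommerfeld radiation condition for $w$. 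By linearity of the scattered-to-far-field map and Rellich's theorem,
\[
A(\theta',\theta,k;M_1\cup \mathbb{B}) - A(\theta',\theta,k;M_1) = w_\infty(\theta'),
\]
where $w_\infty$ is the far-field pattern of $w$; the lemma thus reduces to proving $\|w_\infty\|_{L^\infty(\mathbb{S}^2)} = \mathcal{O}(\delta)$.

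Because $\overline{\mathbb{B}_0} \cap M_1 = \emptyset$, the Helmholtz equation for $u_1$ has real-analytic solution on a fixed open neighborhood of $\overline{\mathbb{B}_0}$, with $C^1$-bounds depending only on $k$, $M_1$, and $\theta$ (not on $\delta$). In particular,
\[
\|u_1\|_{L^\infty(\partial \mathbb{B})} \leq C, \qquad u_1(y) = u_1(0) + \mathcal{O}(\delta) \text{ on } \partial \mathbb{B},
\]
so the data driving $w$ is $\mathcal{O}(1)$ but concentrated on a surface of area $\mathcal{O}(\delta^2)$.

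Next, represent $w$ via a combined Brakhage-Werner layer potential on $\partial M_1 \cup \partial \mathbb{B}$ with free-space fundamental solution $\Phi(x,y) = e^{ik|x-y|}/(4\pi|x-y|)$; this turns the exterior Dirichlet problem into a $2\times 2$ second-kind Fredholm system on the two boundaries for densities $\phi_1$ and $\phi_\delta$. Rescale the small boundary by $y = \delta \eta$ with $\eta \in \partial \widetilde{\mathbb{B}} := \{r_0(\theta)\theta : \theta \in \mathbb{S}^2\}$, a fixed $C^2$ star-shaped surface. The rescaled cross-coupling blocks between $\partial M_1$ and $\partial \widetilde{\mathbb{B}}$ carry an additional factor $\delta^2$ from the surface measure and vanish as $\delta \to 0$; the diagonal block on $\partial \widetilde{\mathbb{B}}$, after extracting the $\delta^{-1}$ factor from $\Phi(\delta\xi,\delta\eta) = (4\pi\delta|\xi-\eta|)^{-1} + \mathcal{O}(1)$, converges to the Laplace layer-potential operator on $\partial \widetilde{\mathbb{B}}$, which is invertible by classical potential theory. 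A Neumann-series perturbation argument then yields uniform invertibility of the full system for all small $\delta$, together with the capacity-type density bounds $\|\phi_\delta\|_{L^\infty(\partial \mathbb{B})} = \mathcal{O}(\delta^{-1})$ and $\|\phi_1\|_{L^\infty(\partial M_1)} = \mathcal{O}(\delta)$.

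Substituting these into the far-field formula for combined layer potentials, the $\partial M_1$ contribution is $\mathcal{O}(1) \cdot \mathcal{O}(\delta) = \mathcal{O}(\delta)$, while the $\partial \mathbb{B}$ contribution is bounded by $\mathrm{area}(\partial \mathbb{B}) \cdot \|\phi_\delta\|_{L^\infty(\partial \mathbb{B})} = \mathcal{O}(\delta^2) \cdot \mathcal{O}(\delta^{-1}) = \mathcal{O}(\delta)$. Summing the two gives $w_\infty(\theta') = \mathcal{O}(\delta)$ uniformly in $(\theta,\theta')$, which is exactly (\ref{eq:est1}). The main obstacle is the uniform invertibility of the coupled integral system as $\delta \to 0$ together with precise tracking of the $\delta$-powers through the rescaling; the assumption $r_0 \in C^2(\mathbb{S}^2)$ enters here to guarantee that the limiting Laplace layer-potential operator on $\partial \widetilde{\mathbb{B}}$ is invertible on the appropriate trace space.
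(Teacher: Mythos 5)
Your decomposition $w=u-u_1$, the passage to the far field via Rellich, and the rescaling of $\partial\mathbb{B}$ to the fixed surface $\partial\mathbb{B}_0$ all match the spirit of the paper's proof, which likewise reduces the lemma to a $2\times 2$ boundary integral system and a $\delta\to 0$ limit. However, there are two concrete problems in your asymptotic analysis. First, your claim that \emph{both} cross-coupling blocks carry a $\delta^2$ factor is false: only the block mapping densities on $\partial\mathbb{B}$ to traces on $\partial M_1$ gains the surface-measure factor $\delta^2$. The reverse block, mapping $\varphi_1\in C(\partial M_1)$ to $\partial\mathbb{B}_0$, is evaluated at points near the origin (at fixed positive distance from $\partial M_1$) but integrated over the \emph{fixed} surface $\partial M_1$, so it is only $\mathcal{O}(1)$; the paper records precisely this as $\|K_1-iS_1\|_{C(\partial M_1)\to C(\partial\mathbb{B})}=\mathcal{O}(1)$.

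Second and more seriously, the limiting diagonal block is \emph{not} invertible the way you assert. After rescaling, the double-layer part $K_\delta$ converges to the Laplace double-layer $K_0$ on $\partial\mathbb{B}_0$, but the single-layer part $S_\delta$ picks up a net factor $\delta$ (the $\delta^{-1}$ from $\Phi$ is overcome by $\delta^2$ from the surface measure), so with a \emph{fixed} coupling parameter the diagonal converges to $I+K_0$, not $I+K_0-iS_0$. The Laplace exterior-Dirichlet double-layer operator $I+K_0$ has a one-dimensional null space, so a Neumann-series perturbation off this limit does not give uniform invertibility, and "invertible by classical potential theory" is exactly what fails here. The paper resolves this by letting the coupling parameter on $\partial\mathbb{B}$ scale as $\eta=\delta^{-1}$, so that $\tfrac{1}{\delta}S_2\to S_0$ and the limit becomes the genuinely invertible $I+K_0-iS_0$; this same $\eta=\delta^{-1}$ then reappears in the far-field integrand and contributes the needed power of $\delta$ with $\varphi_2=\mathcal{O}(1)$ (rather than your unproven $\mathcal{O}(\delta^{-1})$). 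You would need to introduce this $\delta$-dependent coupling, or else prove a sharper resolvent estimate $\|(I+K_\delta-iS_\delta)^{-1}\|=\mathcal{O}(\delta^{-1})$ by a Fredholm perturbation argument that accounts for the kernel of $I+K_0$ — neither of which is in your sketch.
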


\begin{proof}
Let $\Phi(x,y)=e^{ik|x-y|}/(4\pi|x-y|)$ be the fundamental solution
to the Helmholtz operator $(\Delta+k^2)$. We know that
$u(x;k,\theta)\in
C^2(\mathbb{R}^3\backslash\overline{M_1\cup\mathbb{B}})\cap
C(\mathbb{R}^3\backslash (M_1\cup\mathbb{B}))$ and can be
represented in the form (see \cite{ColKre})
\begin{equation}\label{eq:representation}
\begin{split}
u(x; M_1\cup\mathbb{B})=e^{ik x\cdot \theta}+& \int_{\partial
M_1}\bigg\{\frac{\partial\Phi(x,y)}{\partial\nu(y)}-{i}\Phi(x,y)\bigg\}\varphi_1(y)\
ds(y) \\
+ & \int_{\partial\mathbb{B}}
\bigg\{\frac{\partial\Phi(x,y)}{\partial\nu(y)}-{i\eta}\Phi(x,y)\bigg\}\varphi_2(y)\
ds(y),
\end{split}
\end{equation}
where $\varphi_1\in C(\partial M_1)$ and $\varphi_2\in
C(\partial\mathbb{B})$ are density functions, and $\eta\neq 0$ is a
real coupling parameter. The densities $\varphi_1$ and $\varphi_2$
are unique solutions to the following integral equation (see
\cite{ColKre})
\begin{equation}\label{eq:n1}
\begin{split}
\varphi(x)+ & 2\int_{\partial
M_1}\bigg\{\frac{\partial\Phi(x,y)}{\partial\nu(y)}-{i}\Phi(x,y)\bigg\}\varphi_1(y)\
ds(y)\\
+ & 2\int_{\partial\mathbb{B}}
\bigg\{\frac{\partial\Phi(x,y)}{\partial\nu(y)}-{i\eta}\Phi(x,y)\bigg\}\varphi_2(y)\
ds(y)=-2 e^{i kx\cdot \theta},
\end{split}
\end{equation}
for $x\in\partial M_1\cup\partial\mathbb{B}$, where
$\varphi(x):=\varphi_1(x)$ for $x\in\partial M_1$ and
$\varphi(x):=\varphi_2(x)$ for $x\in\partial \mathbb{B}$.
%
%
We introduce the integral operators
\begin{align*}
& (S_1\varphi_1)(x)=2\int_{\partial M_1}\Phi(x,y)\varphi_1(y) d
s(y),\quad
(K_1\varphi_1)(x)=2\int_{\partial M_1}\frac{\partial\Phi(x,y)}{\partial\nu(y)}\varphi_1(y)ds(y)\\
&(S_2\varphi_2)(x)=2\int_{\partial\mathbb{B}}\Phi(x,y)\varphi_2(y) d
s(y),\quad
(K_2\varphi_2)(x)=2\int_{\partial\mathbb{B}}\frac{\partial\Phi(x,y)}{\partial\nu(y)}\varphi_2(y)ds(y),
\end{align*}
and set
\[
h_1(x):=-2e^{ik x\cdot\theta},\ \ x\in
\partial M_1;\qquad h_2(x):=-2e^{ik x\cdot \theta},\ \ x\in\partial\mathbb{B}.
\]
Then equation (\ref{eq:n1}) can be rewritten as
\begin{align}
& [\varphi_1+K_1\varphi_1-{i}S_1\varphi_1+K_2\varphi_2-{i}\eta
S_2\varphi_2](x)=h_1(x),\quad
x\in \partial M_1\label{eq:eq1}\\
&
[\varphi_2+K_2\varphi_2-{i\eta}S_2\varphi_2+K_1\varphi_1-{i}S_1\varphi_1](x)=h_2(x),\quad
x\in
\partial\mathbb{B}.\label{eq:eq2}
\end{align}
It is remarked that the integral operators involved in equations
(\ref{eq:eq1}) and (\ref{eq:eq2}) with weakly singular integral
kernels have to be understood in the sense of Cauchy principle
values and we refer to \cite{ColKre} and \cite{Mcl} for related
mapping properties. Clearly, $\varphi_1$ and $\varphi_2$ are
functions dependent on $\delta$. We next study their asymptotic
behaviors as $\delta\rightarrow +0$. To this end, we fix $\delta>0$
but being sufficiently small and take $\eta=\delta^{-1}$.

In the sequel, without loss of generality, we may assume that
$dist(\partial M_1,\partial\mathbb{B}_0)>c_0>0$, otherwise one can
shrink $\mathbb{B}_0$ to $1/2 \mathbb{B}_0$. By straightforward
calculations, it can be easily shown that
\begin{equation}\label{eq:relation1}
\|K_2-{i\eta}S_2\|_{C(\partial\mathbb{B})\rightarrow C(\partial
M_1)}=\mathcal{O}(\delta),\ \|K_1-{i}S_1\|_{C(\partial
M_1)\rightarrow C(\partial\mathbb{B})}=\mathcal{O}(1).
\end{equation}
Next, for $x\in\partial \mathbb{B}_0$, we define
\[
(K_0\phi)(x)=2\int_{\partial\mathbb{B}_0}\frac{\partial
\Phi_0(x,y)}{\partial\nu(y)}\phi(y)\ ds(y),\
(S_0\phi)(x)=2\int_{\partial\mathbb{B}_0}\Phi_0(x,y)\phi(y)\ ds(y),
\]
where $\phi\in C(\partial\mathbb{B}_0)$ and
$\Phi_0(x,y)=1/(4\pi|x-y|)$ is the fundamental solution to the
Laplace operator. It is known that both $S_0$ and $K_0$ are compact
operators in $C(\partial\mathbb{B}_0)$ (see \cite{ColKre2}). By
changing the integration to the boundary of the reference obstacle
$\partial \mathbb{B}_0$, we have
\begin{align*}
(S_2\varphi)(x)=&2\int_{\partial\mathbb{B}}
\frac{e^{ik|x-y|}}{|x-y|}\varphi(y) \
ds(y)=2\delta\int_{\partial\mathbb{B}_0}\frac{e^{ik\delta|x'-y'|}}{|x'-y'|}\varphi(\delta
y')\ ds(y'),\\
(K_2\varphi)(x)=& 2 \int_{\partial\mathbb{B}}
\partial\left(\frac{e^{ik|x-y|}}{|x-y|}\right)/\partial\nu(y)\varphi(y)\
ds(y)\\
=& 2 \int_{\partial\mathbb{B}_0}
\partial\left(\frac{e^{ik\delta |x'-y'|}}{|x'-y'|}\right)/\partial\nu(y')\varphi(\delta y')\
ds(y'),
\end{align*}
where $\varphi\in C(\partial\mathbb{B})$, $x,y\in\partial
\mathbb{B}$ and $x':=x/\delta, y'=y/\delta\in\partial \mathbb{B}_0$.
Then by using power series expansion of the exponential function
$e^{ik\delta|x-y|}$, we have by direct calculations
\begin{equation}\label{eq:relation2}
\|\frac{1}{\delta}S_2-S_0\|_{C(\delta\partial\mathbb{B}_0)\rightarrow
C(\partial\mathbb{B}_0)}=\mathcal{O}(\delta),\quad
\|K_2-K_0\|_{C(\delta\partial\mathbb{B}_0)\rightarrow
C(\partial\mathbb{B}_0)}=\mathcal{O}(\delta^2).
\end{equation}

By changing the integration to $\partial\mathbb{B}_0$ and using the
results in (\ref{eq:relation2}), we have from (\ref{eq:eq2}) that
\begin{equation}\label{eq:varphi2}
\varphi_2(\delta x)=[I+K_0-i
S_0+\mathcal{O}(\delta)]^{-1}[h_2(\delta
x)-(K_1-{i}S_1)\varphi_1(\delta {x})],\quad
{x}\in\partial\mathbb{B}_0.
\end{equation}
It is noted here that $(I+K_0-i S_0)$ is bounded invertible (see
\cite{ColKre2}). Then, plugging (\ref{eq:varphi2}) into
(\ref{eq:eq1}) and using the relations in (\ref{eq:relation1}), we
further have
\begin{equation}
\varphi_1=[I+K_1-\mathrm{i}S_1+\mathcal{O}(\delta)]^{-1}(h_1+\mathcal{O}(\delta)),
\end{equation}
which, by noting $I+K_1-\mathrm{i}S_1$ is invertible (see
\cite{ColKre}), gives
\begin{equation}\label{eq:error2}
\varphi_1=\tilde{\varphi}_1+\mathcal{O}(\delta),
\end{equation}
where
\[
\tilde{\varphi}_1=[I+K_1-\mathrm{i}S_1]^{-1}h_1.
\]
Furthermore, (\ref{eq:error2}) together with (\ref{eq:varphi2})
implies that
\begin{equation}\label{eq:error3}
\varphi_2=\mathcal{O}(1).
\end{equation}
Finally, by (\ref{eq:representation}) we know
\begin{equation}\label{eq:far-field}
\begin{split}
A(\theta',\theta,k; M_1\cup\mathbb{B})=&\frac{1}{4\pi}\int_{\partial
M_1}\bigg\{\frac{\partial e^{-iky\cdot
\theta'}}{\partial\nu(y)}-{i}e^{-iky\cdot\theta'}\bigg\}\varphi_1(y)\
ds(y)\\
+& \frac{1}{4\pi}\int_{\partial \mathbb{B}}\bigg\{\frac{\partial
e^{-iky\cdot
\theta'}}{\partial\nu(y)}-{i\eta}e^{-iky\cdot\theta'}\bigg\}\varphi_2(y)\
ds(y).
\end{split}
\end{equation}
Using the estimates in (\ref{eq:error2}) and (\ref{eq:error3}) to
(\ref{eq:far-field}) and changing the integration over
$\partial\mathbb{B}$ to $\partial\mathbb{B}_0$, we have
\[
A(\theta',\theta,k;M_1\cup\mathbb{B})=
A(\theta',\theta,k;M_1)+\mathcal{O}(\delta),
\]
where we have made use of the fact that
\[
A(\theta',\theta,k;M_1)=\frac{1}{4\pi}\int_{\partial
M_1}\tilde{\varphi}_1(y) \bigg\{\frac{\partial e^{-{i}k \theta'\cdot
y}}{\partial \nu(y)}-{i} e^{-{i}k \theta'\cdot y}\bigg\}\ ds(y).
\]
The proof is completed.
\end{proof}

\begin{Remark}\label{rem:nonsmooth}
If $\partial M_1\cup \partial\mathbb{B}$ is only Lipschitz
continuous (whence $r_0(\theta)\in C^{0,1}(\mathbb{S}^{2})$), one
can make use of the mapping properties of relevant boundary layer
potential operators presented in \cite{Mcl} and derive similar
estimate.
\end{Remark}

\begin{Proposition}\label{prop22}
Suppose that $M_1\cap B_{R_2,p}=\emptyset$. The cloaking device
$((M_1\cup B_{R_1,p})$ $\bigoplus((M_1\cup B_{R_1,p})^+, \hat g))$,
where $\hat{g}(x)$ is $g_1$ in (\ref{eq:1.6}) for $R_1<\|x\|_p<R_2$
and $g_0(x)$ for $x\in (M_1\cup B_{R_2,p})^+$, reshapes the obstacle
$M_1\cup B_{R_1,p}$ to $M_1\cup B_{R_0,p}$. Furthermore, the
limiting case with $R_0=0$ corresponds to the perfect cloaking of
$B_{R_1,p}$, namely
\begin{equation}
A(\theta',\theta,k;(M_1\cup B_{R_1,p})\bigoplus((M_1\cup
B_{R_1,p})^+, \hat g))=A(\theta',\theta,k; M_1).
\end{equation}
\end{Proposition}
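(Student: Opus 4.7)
The plan is to mimic the two-step strategy that yielded Propositions~\ref{prop:1} and~\ref{prop:invisible}, with $M_1$ riding along as an inert component. First I would establish an exact reshaping identity for every $R_0\in(0,R_1)$ by transforming the pure obstacle $M_1\cup B_{R_0,p}$ into the cloaked configuration; then I would let $R_0\to +0$ and invoke Lemma~\ref{lem:justification1} to control the scattering amplitude as the minified ball collapses to a point.

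For the reshaping step, extend the map $F$ of~(\ref{eq:1}) by the identity outside $B_{R_2,p}$. The hypothesis $M_1\cap B_{R_2,p}=\emptyset$ makes this extension a bi-Lipschitz self-homeomorphism of $\mathbb{R}^3$ that fixes $M_1$ pointwise and carries $(M_1\cup B_{R_0,p})^+$ onto $(M_1\cup B_{R_1,p})^+$. Let $u\in H^1_{loc}((M_1\cup B_{R_0,p})^+)$ be the scattering solution for the pure obstacle $M_1\cup B_{R_0,p}$ in vacuum, and set $\tilde u:=u\circ F^{-1}$. Then $\tilde u$ vanishes on $\partial(M_1\cup B_{R_1,p})$, satisfies $(\Delta_{\hat g}+k^2)\tilde u=0$ by form-invariance (the push-forward of $g_0$ under the extended $F$ equals $g_1$ on the annulus $R_1<\|x\|_p<R_2$ and $g_0$ elsewhere, matching $\hat g$ piecewise), and retains the decomposition into the incident plane wave plus an outgoing scattered part since $F$ is the identity outside $B_{R_2,p}$. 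Uniqueness for the well-posed exterior problem therefore identifies $\tilde u$ with the scattering solution for the cloaking device, and matching the $|x|^{-1}$ coefficients of the asymptotics gives
\begin{equation}\label{eq:pp22a}
A(\theta',\theta,k;(M_1\cup B_{R_1,p})\bigoplus((M_1\cup B_{R_1,p})^+,\hat g))=A(\theta',\theta,k;M_1\cup B_{R_0,p}),
\end{equation}
which is exactly the first assertion.

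For the limit step, apply Lemma~\ref{lem:justification1} with $\mathbb{B}=B_{R_0,p}$, $\delta=R_0$, and reference obstacle $\mathbb{B}_0$ the unit $l^p$-ball $B_{1,p}$ (invoking Remark~\ref{rem:nonsmooth} when $p\in\{1,\infty\}$, where the boundary is only Lipschitz). Since $M_1\cap B_{R_2,p}=\emptyset$ forces $M_1\cap B_{R_0,p}=\emptyset$, the lemma yields
\begin{equation}\label{eq:pp22b}
A(\theta',\theta,k;M_1\cup B_{R_0,p})=A(\theta',\theta,k;M_1)+\mathcal{O}(R_0)
\end{equation}
as $R_0\to +0$. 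Chaining~(\ref{eq:pp22a}) with~(\ref{eq:pp22b}) and passing to the limit $R_0\to +0$ then delivers the perfect-cloaking identity.

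The main obstacle is really only the bookkeeping of the first step: one must verify that the radial map on $B_{R_2,p}\setminus B_{R_0,p}$ glues to the identity on $B_{R_2,p}^+$ to yield a globally bi-Lipschitz map in the class admitted by Section~\ref{subset:11}, that its push-forward metric agrees piecewise with the $\hat g$ specified in the statement, and that $\tilde u-e^{ikx\cdot\theta}$ inherits the Sommerfeld outgoing condition from $u-e^{ikx\cdot\theta}$. Each of these is essentially immediate from $F\equiv\mathrm{id}$ outside $B_{R_2,p}$, so the singular PDE produced by the limiting transformation~(\ref{eq:ideal trans mini}) is never confronted directly; it is absorbed entirely into the $\mathcal{O}(R_0)$ remainder of Lemma~\ref{lem:justification1}.
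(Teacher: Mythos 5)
Your proposal is correct and follows essentially the same two-step strategy as the paper's own proof: first establish the exact reshaping identity by restricting the transformation $F$ of~(\ref{eq:1}) to $(M_1\cup B_{R_0,p})^+$ and invoking form-invariance, then apply Lemma~\ref{lem:justification1} with $\delta=R_0$ to pass to the limit. The only addition is your explicit appeal to Remark~\ref{rem:nonsmooth} for $p\in\{1,\infty\}$, a detail the paper leaves implicit.
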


\begin{proof}
Let $F$ be the transformation in (\ref{eq:1}) and let $\hat{F}$ be
the restriction of $F$ over $(M_1\cup B_{R_0})^+$. Clearly,
$\hat{g}=\hat{F}_* g_0$. By a similar argument as the proof of
Proposition~\ref{prop:1}, it is easily seen that $M_1\cup B_{R_1,p}$
is virtually reshaped to $M_1\cup B_{R_0,p}$ by the cloaking of
$\hat{g}$, i.e.,
\[
A(\theta',\theta,k; (M_1\cup B_{R_1,p})\bigoplus((M_1\cup
B_{R_1,p})^+, \hat g))=A(\theta',\theta,k; M_1\cup B_{R_0,p}).
\]

Next, by Lemma~\ref{lem:justification1},
\begin{equation}
A(\theta',\theta, k; M_1\cup B_{R_0,p})=A(\theta',\theta, k;
M_1)+\mathcal{O}(R_0)\quad \mbox{as $R_0\rightarrow+0$},
\end{equation}
and hence the limiting case with $R_0=0$ yields an ideal cloaking of
$B_{R_1,p}$.
\end{proof}

\begin{Remark}
Clearly, Proposition~\ref{prop22} implies a same invisibility result
for perfectly cloaking an $l^p$-ball as that in
Proposition~\ref{prop:invisible} for perfectly cloaking an Euclidean
ball.
\end{Remark}


\subsection{Virtual magnification by cloaking}

Let $0<R_0<R_1<R_2$, and let $B_{R_0,p}$ be the obstacle which we
intend to virtually magnify to $B_{R_1,p}$ by using a cloaking for
$B_{R_0,p}$ supported in $B_{R_2,p}\backslash B_{R_0,p}$ (see
Fig.~\ref{fig:3}). We define $\tau=R_1/R_0$ to be the
\emph{magnification ratio}.

\begin{figure}
\centering
  \includegraphics[width=9cm]{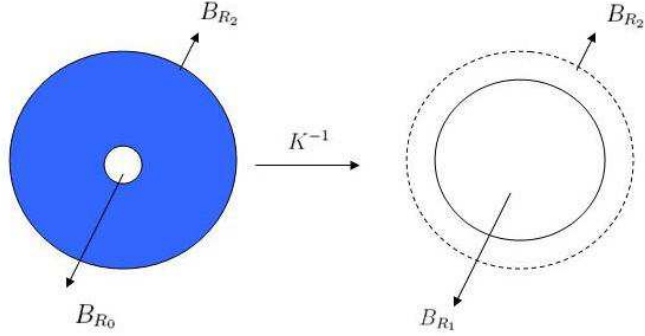}\\
  \caption{Illustration for magnification: the cloaking device $B_{R_0}\bigoplus (B_{R_0}, \hat{\hat{g}})$
  virtually reshapes $B_{R_0}$ to $B_{R_1}$.}\label{fig:3}
\end{figure}

Let $K: \mathbb{R}^3\backslash B_{R_1,p}\mapsto
\mathbb{R}^3\backslash B_{R_0,p}$ be defined by
\begin{equation*}
x:=K(y)=
\begin{cases}
\qquad y, & \mbox{for $\|y\|_p\geq R_2$},\\
\bigg(\frac{R_0-R_1}{R_2-R_1}R_2+\frac{R_2-R_0}{R_2-R_1}\|y\|_p\bigg)\frac{y}{\|y\|_p},\
& \mbox{for $R_0<\|y\|_p<R_2$}.
\end{cases}
\end{equation*}
It is verified directly that $K$ maps the $l^p$-annulus $R_1\leq
\|y\|_p\leq R_2$ to the $l^p$-annulus $R_0\leq \|y\|_p\leq R_2$.
Moreover, $K$ is bijective and both $K$ and $K^{-1}$ are (uniformly)
Lipschitz continuous. Set
\begin{equation}\label{eq:medium2}
\hat{\hat{g}}=K_* g_0,
\end{equation}
be the metric in $B_{R_0,p}^+$. Clearly, $\hat{\hat{g}}$ is
Euclidean outside $B_{R_2,p}$.
\begin{Proposition}\label{prop:21}
The cloaking device $B_{R_0,p}\bigoplus (B_{R_2,p}\backslash
B_{R_0,p}, \hat{\hat{g}})$ with $\hat{\hat{g}}$ defined in
(\ref{eq:medium2}), reshapes $B_{R_0,p}$ virtually to $B_{R_1,p}$.
That is, the physical obstacle $B_{R_0,p}$ with the cloaking
material $(B_{R_2,p}\backslash B_{R_0,p},\hat{\hat{g}})$ is
virtually magnified to the obstacle $B_{R_1,p}$ with a magnification
ratio $\tau:=R_1/R_0$.
\end{Proposition}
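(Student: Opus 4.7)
The plan is to mirror almost verbatim the argument already used in the proof of Proposition~\ref{prop:1}, with the roles of the virtual and physical obstacles essentially exchanged. The key structural input is that $K$ is a bijective, (uniformly) bi-Lipschitz, orientation-preserving map from $B_{R_1,p}^+$ onto $B_{R_0,p}^+$, equal to the identity outside $B_{R_2,p}$, and sending $\partial B_{R_1,p}$ onto $\partial B_{R_0,p}$. These are exactly the hypotheses needed to apply the push-forward machinery of Section~\ref{subset:11}.

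First I would introduce the two scattering solutions to be compared. Let $u\in H_{loc}^1(B_{R_1,p}^+)$ be the unique solution of the Helmholtz problem (\ref{eq:Helmholtz})--(\ref{eq:Dirichlet}) in vacuum with obstacle $B_{R_1,p}$ and incident wave $e^{ikx\cdot\theta}$, and let $v\in H_{loc}^1(B_{R_0,p}^+)$ be the unique solution of the Helmholtz problem corresponding to the cloaking device $B_{R_0,p}\bigoplus(B_{R_2,p}\setminus B_{R_0,p},\hat{\hat{g}})$ with the same incident wave. Define $\tilde{u}:=K_*u=u\circ K^{-1}$ on $B_{R_0,p}^+$; since $K^{-1}$ is Lipschitz and $u\in H_{loc}^1$, we have $\tilde{u}\in H_{loc}^1(B_{R_0,p}^+)$.

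Next I would verify that $\tilde{u}$ solves the same problem as $v$. By the push-forward identity (\ref{eq:pushforward}) and the invariance of the Helmholtz equation under Lipschitz diffeomorphisms, $(\Delta_{\hat{\hat{g}}}+k^2)\tilde{u}=0$ in $B_{R_0,p}^+$, where $\hat{\hat{g}}=K_*g_0$. The Dirichlet condition transfers since $K(\partial B_{R_1,p})=\partial B_{R_0,p}$, giving $\tilde{u}|_{\partial B_{R_0,p}}=u|_{\partial B_{R_1,p}}=0$. Because $K$ is the identity on $\|y\|_p\geq R_2$, $\hat{\hat{g}}=g_0$ there and $\tilde{u}=u$ pointwise in that region; in particular $\tilde{u}-e^{ikx\cdot\theta}$ inherits the Sommerfeld radiation condition from $u-e^{ikx\cdot\theta}$. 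Well-posedness of the scattering problem for the cloaking device then forces $\tilde{u}=v$, and since both functions equal $u$ outside $B_{R_2,p}$, their far-field patterns agree:
\[
A(\theta',\theta,k;B_{R_0,p}\bigoplus(B_{R_2,p}\setminus B_{R_0,p},\hat{\hat{g}}))=A(\theta',\theta,k;B_{R_1,p}).
\]

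I do not expect any serious obstacle: the only points requiring care are the Lipschitz regularity of $K$ at the interface $\|y\|_p=R_2$ (where the two pieces in the definition of $K$ meet continuously but non-smoothly) and at $\|y\|_p=R_1$ (mapped to $\|y\|_p=R_0$), but both are already covered by the $L^\infty$/Lipschitz framework set up in Section~\ref{subset:11}. Unlike the minification case, there is no singular limit here, so no separate stability analysis is required; magnification is just a clean change of variables.
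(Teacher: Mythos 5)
Your argument coincides with the paper's own proof: both push forward the free-space solution for $B_{R_1,p}$ via $K$, invoke the transformation invariance of the Helmholtz equation to recognize $K_*u$ as the scattering solution for $B_{R_0,p}\bigoplus(B_{R_0,p}^+,\hat{\hat{g}})$, transfer the Dirichlet condition through $K(\partial B_{R_1,p})=\partial B_{R_0,p}$, and read off equality of far-field patterns since $K=\mathrm{id}$ outside $B_{R_2,p}$. Your write-up is a bit more explicit (introducing $v$ and checking the radiation condition) but it is the same proof.
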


\begin{proof}
Let $u\in H_{loc}^1(B_{R_1,p}^+)$ be the solution to the Helmholtz
equation~(\ref{eq:Helmholtz})-(\ref{eq:Dirichlet}) associated with
the obstacle $B_{R_1,p}$. Define $\tilde{u}=K_* u\in
H_{loc}^1(B_{R_0,p}^+)$. Again, by the invariance of the Helmholtz
equation under transformation together with the fact that $\tilde
u|_{\partial B_{R_0,p}}=u|_{\partial B_{R_1,p}}=0$, we see
$\tilde{u}$ is the scattering solution corresponds to
$B_{R_0,p}\bigoplus (B_{R_0,p}^+, \hat{\hat{g}})$. That is,
\[
A(\theta',\theta,k; B_{R_0,p}\bigoplus (B_{R_0,p}^+,
\hat{\hat{g}}))=A(\theta',\theta,k; B_{R_1,p}).
\]
\end{proof}

In Proposition~\ref{prop:1}, we use $F$ to compress the vacuum to
achieve a transformation-based minification device, whereas in
Proposition~\ref{prop:21}, we use $K$ to loosen up the vacuum to
achieve a transformation-based magnification device. Note that
$R_2>R_1$, the cloaking device is of size larger than the virtual
obstacle image, though the virtual obstacle could be of size
arbitrarily close to the cloaking device. Hence, the cloaking in
Proposition~\ref{prop:21} is not of magnification in the real sense.
However, our magnification result is still of particular practical
interests, e.g., if one is only interested in recovering an obstacle
without knowing a priori that it is cloaked, then the scattering
reconstruction will give a virtually magnified obstacle. On the
other hand, we would like to mention that in \cite{MNMCJ,MNMP,NMM},
it is demonstrated that a coated cylindrical core can be extended
beyond the cloaking shell into the matrix, where the cloaking
material must be negative refractive indexed, namely, the
corresponding metric $g$ has negative eigenvalues. A general
strategy is presented in \cite{LeoPhi} on how to devise a negative
refractive indexed (NRI) cloaking by using the transformation
optics. There, the transformation $F$ is neither injective nor
orientation-preserving, which maps a right-handed medium to
left-handed medium. Based on NRI cloaking, it is shown in
\cite{LYGM,YCLM} that one can virtually reshape a cylindrical
perfect conductor of size bigger than the cloaking device. However,
all the aforementioned results are essentially based on exerting
transformation directly to the analytical solutions, which is not of
the main theme of the present paper.

\subsection{Virtually reshaping acoustic obstacles by cloaking}\label{sect:general acoustic
reshaping}

Our discussion so far has been mainly concerned with the
minification and magnification of obstacles by cloaking. Clearly, we
may consider virtually reshaping an obstacle arbitrarily provided a
suitable transform can be found with which we can make essential use
of the transformation invariance of the Helmholtz equation.  Let $M$
be an obstacle with $m$ pairwise disjoint simply connected
components $M_l, l=1,2,\ldots, m$, i.e., $M=\bigcup_{l=1}^{m} M_l$.
Let $M_l\subset M_{l}'$ and $M_l'\cap M_{l'}'=\emptyset$, for
$l,l'=1,2,\ldots,m$ and $l\neq l'$. Set $M'=\bigcup_{l=1}^{m} M_l'$.
Let $\widetilde{M}=\bigcup_{l=1}^m \widetilde{M}_l$ be another
obstacle with $\widetilde{M}_l\subset {M}_{l'}$. Suppose there exist
\begin{equation*}
F_l: \mathbb{R}^3\backslash \widetilde{M}_l\mapsto
\mathbb{R}^3\backslash M_l,\quad l=1,2,\ldots, m
\end{equation*}
such that $F_l$ is orientation-preserving and invertible with $F_l$
and $F^{-1}_l$ Lipschitz continuous, and $F_l=id$ outside $M_l'$.
Set $g_l=(F_l)_* g_0$ and let $M_l\bigoplus(M_l'\backslash M_l,
g_l)$ be a cloaking device for $M_l$.

\begin{Theorem}\label{thm:acoustic reshaping}
The cloaking device $M'$ virtually reshapes the obstacle $M$ to
$\widetilde{M}$. That is,
\[
A(\theta',\theta,k; M\bigoplus(M'\backslash M,
g'))=A(\theta',\theta, k; \widetilde{M}),
\]
where $g'$ is $g_l$ in $M_l'\backslash M_l$.
\end{Theorem}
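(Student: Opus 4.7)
The plan is to reduce the multi-component result to the single-component argument of Proposition~\ref{prop:1} by patching the maps $F_l$ into one global transformation. First I would define $F\colon \mathbb{R}^3\setminus\widetilde{M}\to\mathbb{R}^3\setminus M$ by
\[
F(y)=\begin{cases} F_l(y), & y\in M_l'\setminus\widetilde{M}_l,\ l=1,\ldots,m,\\ y, & y\in\mathbb{R}^3\setminus\bigcup_{l=1}^m M_l'.\end{cases}
\]
Because the $M_l'$ are pairwise disjoint and each $F_l$ equals the identity outside $M_l'$ (hence in a neighbourhood of $\partial M_l'$), the piecewise definitions agree on the interfaces, so $F$ is a well-defined orientation-preserving bijection which, together with $F^{-1}$, is (uniformly) Lipschitz on $\mathbb{R}^3\setminus\widetilde M$, and which coincides with the identity outside the compact set $M'$. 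Moreover $F(\partial\widetilde M_l)=\partial M_l$ for every $l$, hence $F(\partial\widetilde M)=\partial M$.

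Next, let $\tilde u\in H_{loc}^1(\mathbb{R}^3\setminus\widetilde M)$ be the unique scattering solution for the obstacle $\widetilde M$ in vacuum, i.e.\ satisfying $(\Delta+k^2)\tilde u=0$ in $\mathbb{R}^3\setminus\widetilde M$, $\tilde u|_{\partial\widetilde M}=0$, incident field $e^{ikx\cdot\theta}$ and the Sommerfeld radiation condition for the scattered part, with far-field pattern $A(\theta',\theta,k;\widetilde M)$. Set $v:=\tilde u\circ F^{-1}$ on $\mathbb{R}^3\setminus M$. Since $F^{-1}$ is bi-Lipschitz, $v\in H_{loc}^1(\mathbb{R}^3\setminus M)$ and $v|_{\partial M}=0$. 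The transformation invariance of the Helmholtz equation recorded around~(\ref{eq:pushforward}) then gives $(\Delta_{F_*g_0}+k^2)v=0$, and by construction the push-forward metric $F_*g_0$ is exactly $g'$: on each shell $M_l'\setminus M_l$ it equals $(F_l)_*g_0=g_l$, while on $\mathbb{R}^3\setminus M'$ it equals $g_0$ since $F$ is the identity there. In particular $v=\tilde u$ on $\mathbb{R}^3\setminus M'$, so $v-e^{ikx\cdot\theta}$ inherits the Sommerfeld radiation condition and the far-field pattern of $\tilde u-e^{ikx\cdot\theta}$.

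By the well-posedness of the scattering problem for $M\bigoplus(M'\setminus M,g')$ noted in Section~\ref{subset:11}, $v$ is the unique scattering solution for the cloaking device, so reading off the far-field expansion yields
\[
A(\theta',\theta,k;M\bigoplus(M'\setminus M,g'))=A(\theta',\theta,k;\widetilde M),
\]
which is the claim. The only mildly delicate point is verifying that the piecewise-defined $F$ is genuinely bi-Lipschitz across the interfaces $\partial M_l'$, but this is immediate from the hypothesis $F_l=\mathrm{id}$ outside $M_l'$ (both branches reduce to the identity on a neighbourhood of $\partial M_l'$), so no further matching argument is needed.
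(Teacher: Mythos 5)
Your proof is correct and follows exactly the route the paper itself invokes when it writes ``the proof is already clear from our earlier discussion on minification and magnification'': you patch the local maps $F_l$ into a global bi-Lipschitz $F$, push the vacuum solution for $\widetilde M$ forward to get a field in $H^1_{loc}(\mathbb{R}^3\setminus M)$ satisfying the transformed Helmholtz equation and the Dirichlet condition, observe that $F=\mathrm{id}$ outside $M'$ so the far field is unchanged, and conclude by uniqueness. This is precisely the argument of Proposition~\ref{prop:1} carried out component by component, with your patching step making explicit what the paper leaves implicit. One minor imprecision: the hypothesis $F_l=\mathrm{id}$ outside $M_l'$ gives $F_l=\mathrm{id}$ on $\overline{\mathbb{R}^3\setminus M_l'}\supset\partial M_l'$, but not automatically in a full (two-sided) neighbourhood of $\partial M_l'$ as your parenthetical asserts; what you actually use, and what suffices, is agreement \emph{on} the interface together with Lipschitz continuity of each branch up to its boundary, so the conclusion stands.
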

The proof is already clear from our earlier discussion on
minification and magnification. We have several important
consequences of the theorem.

\begin{Remark}
Suppose that some of the components of $M$ are uncloaked, say $M_l$
for $1\leq l\leq m'<m$, and this corresponds to taking
$M_l=M_l'=\widetilde{M}_l$ and $F_l=id$ for $l=1,2,\ldots,m'$.
\end{Remark}

\begin{Remark}\label{rem:invisibility}
If for some $M_l$ being star-shaped w.r.t. certain point, and the
transformation $F_l^{-1}$ shrinks $M_l$ only in the radial direction
to $\widetilde{M}_l$, then the case with $\widetilde{M}_{l}$
degenerated to a single point corresponds to an ideal cloaking for
$M_{l}$. By using a similar argument as that for
Proposition~\ref{prop22} together with the estimate in
Lemma~\ref{lem:justification1}, one has that $M_{l}$ is invisible to
detection. In fact, by repeating the argument, the same conclusion
holds when there are more than one obstacle component is perfectly
cloaked.
\end{Remark}

It is noted that in Remark~\ref{rem:invisibility}, the perfectly
cloaked obstacle components are required to be star-shaped, and this
is because we need to make use of the estimate in
Lemma~\ref{lem:justification1} to achieve the invisibility. In order
to show the prefect cloakings of more generally shaped obstacles,
one may need different thoughts.

In the rest of this section, we shall indicate that all our previous
results on virtual reshaping in space dimension three can be
straightforwardly extended to the two dimensional case. In fact, for
two dimensional scattering problem, the (positive definite) acoustic
density $\sigma\in L^\infty(M^+)^{2\times 2}$ also transforms
according to (\ref{eq:trans sigma}). Therefore, the reshaping result
presented in Theorem~\ref{thm:acoustic reshaping} is still valid in
$\mathbb{R}^2$. In order to achieve invisibility for perfect
cloaking of star-shaped obstacles in $\mathbb{R}^2$, one needs to
show a similar estimate to Lemma~\ref{lem:justification1}. Indeed,
replacing $\Phi(x,y)$ by the first kind Hankel function
$\frac{i}{4}H_0^{(1)}(k|x-y|)$ of order zero in the proof of
Lemma~\ref{lem:justification1} and using the corresponding mapping
properties of the integral operators involved (see \cite{ColKre}),
one can obtain by similar arguments the following estimate to the
scattering problem in $\mathbb{R}^2$ (see (\ref{eq:est1}) for
comparison),
\begin{equation}\label{eq:estimate2}
A(\theta',\theta,k;M_1\cup \mathbb{B})=A(\theta',\theta,k;
M_1)+\mathcal{O}(|\log\delta|^{-1})\quad \mbox{as $\delta\rightarrow
+0$}.
\end{equation}
Obviously, with (\ref{eq:estimate2}) one can show that the perfect
cloaking of a star-shaped obstacles in $\mathbb{R}^2$ makes it
invisible to detection.

\section{Virtual reshaping for electromagnetic scattering}\label{sect:em
reshaping}

\subsection{The Maxwell's equations}
We define Maxwell's equations for the scatterer $M\bigoplus(M^+,g)$
as the one introduced in Section~\ref{subset:11}. Using the metric
$g$, we define a (positive definite) electric permittivity tensor
$\varepsilon$ and magnetic permeability tensor $\mu$ by
\begin{equation}\label{eq:f}
\varepsilon^{ij}=\mu^{ij}=|g|^{1/2}g^{ij}\quad \mbox{on\ $M^+$}.
\end{equation}
It is clear that $\varepsilon=(\varepsilon_{ij})_{i,j=1}^3$ and
$\mu=(\mu_{ij})_{i,j=1}^3$ are invariantly defined and transform as
a product of a $(+)$-density and a contravariant symmetric
two-tensor with the same rule as that for acoustic density $\sigma$
in (\ref{eq:trans sigma}). We consider the scattering due to the
scatterer $M\bigoplus(M^+,g)$ corresponding to some incident wave
field. The resulting total electric and magnetic fields, $E$ and $H$
in $M^+$, are defined as differential 1-forms, given in some local
coordinates by
\[
E=E_j\ d x^j,\qquad H=H_j\ d x^j.
\]
Here and in the following, we use Einstein's summation convention,
summing over indices appearing both as sub- and super-indices in
formulae. Then $(E,H)$ satisfies Maxwell's equations on $(M^+,g)$ at
frequency $k$
\begin{equation}\label{eq:Maxwell equations}
d E=i k *_g H,\quad d H=-ik *_g E,
\end{equation}
where $*_g$ denote the Hodge-operator on $1$-forms given by
\[
*_g(E_j\ dx^j)=\frac 1 2 |g|^{1/2}g^{jl}E_j s_{lpq} d x^p\wedge d
x^q=\frac 1 2 \varepsilon^{jl} E_j s_{lpq} d x^p\wedge d x^q,
\]
with $s_{lpq}$ denoting the Levi-Civita permutation symbol, and
$s_{lpq}=1$ (resp. $s_{lpq}=-1$) if $(l,p,q)$ is an even (resp. odd)
permutation of $(1,2,3)$ and zero otherwise. By introducing, for
$H=H_j dx^j$, the notation
\[
(\mbox{curl}\ H)^l=s^{lpq} \frac{\partial}{\partial x^p}H_q,
\]
the exterior derivative may then be written as
\[
d(H_q dx^q)=\frac{\partial H_q}{\partial x^p} d x^p\wedge d
x^q=\frac 1 2 (\mbox{curl}\ H)^l s_{lpq} dx^p\wedge dx^q.
\]
Hence, in a fix coordinate, the Maxwell's equations (\ref{eq:Maxwell
equations}) can be written as
\begin{equation}\label{eq:maxwell 2}
(\mbox{curl}\ E)^l=i k \mu^{jl} H_j,\quad (\mbox{curl}\
H)^l=-ik\varepsilon^{jl} E_j.
\end{equation}
Without loss of generality, we take the incident fields to be the
normalized time-harmonic electromagnetic plane waves,
\[
{E}^i(x):=\frac{{i}}{k}\mbox{curl}\ \mbox{curl}\,p\, e^{{i}k
{x}\cdot \theta},\quad {H}^i(x):= \mbox{curl}\,p\,e^{\mathrm{i}k
{x}\cdot \theta},
\]
where $p\in \mathbb{R}^3$ is a polarization. As usual, the radiation
fields are assumed to satisfy the \emph{Silver-M\"uller radiation
condition}. To complete the description, we further assume that the
obstacle $M$ is perfectly conducting, and we have the following two
types of boundary conditions on $\partial M$: the perfect electric
conductor (PEC) boundary condition
\begin{equation*}
\nu\times E|_{\partial M}=0,
\end{equation*}
or the perfect magnetic conductor (PMC) boundary condition
\begin{equation*}
\nu\times H|_{\partial M}=0,
\end{equation*}
where $\nu$ is the Euclidean normal vector of $\partial M$.

We shall work with $\varepsilon, \mu\in (L^\infty(M^+))^{3\times
3}$. It is convenient to introduce the following Sobolev spaces
\begin{align*}
 H(\mbox{curl};\Omega)&=\{u\in L^2(\Omega)^3;\ \mbox{curl}\, u\in
L^2(\Omega)\},\\
 H_{loc}(\mbox{curl}; M^+)&=\{u\in\mathscr{D}'(M^+)^3;\  u\in
H(\mbox{curl}; M^+\cap B_\rho)\\
&\hspace*{1.5cm}\mbox{for each finite $\rho$ with $M\subset
B_\rho$}\}.
\end{align*}
Then it is known that there exists a unique solution $(E,H)\in
H_{loc}(\mbox{curl}; M^+)\oplus H_{loc}(\mbox{curl}; M^+)$ to the
electromagnetic scattering problem. Moreover, the solution
$E(x,k,p,\theta)$ admits asymptotically as $|x|\rightarrow +\infty$
the development (see \cite{ColKre})
\begin{equation}
E(x,k,p,\theta)=E^i(x)+\frac{e^{ik
|x|}}{|x|}E_\infty(\theta',k,p,\theta)+\mathcal{O}(\frac{1}{|x|^2}),
\end{equation}
where $\theta'=x/|x|\in\mathbb{S}^{2}$. The analytic function
$E_\infty(\theta',k,p,\theta)$ is known as the \emph{electric
far-field pattern}. Similar to Definition~\ref{def:acoustic}, we
introduce

\begin{Definition}\label{def:em}
We say that $M\bigoplus (M^+,g)$ (virtually) reshapes the obstacle
$M$ to another obstacle $\widetilde{M}$, if the electric far-field
patterns coincide for $M\bigoplus (M^+,g)$ and $\widetilde{M}$, i.e.
\[
E_\infty(\theta',k,p,\theta;M\bigoplus
(M^+,g))=E_\infty(\theta',k,p,\theta; \widetilde{M}).
\]
\end{Definition}
We would also like to remark that the cloaking device in
Definition~\ref{def:em} can also be written as $M\bigoplus
(M^+,\varepsilon,\mu)$ according to the correspondence (\ref{eq:f}),
where $\varepsilon$ and $\mu$ are respectively, electric
permittivity and magnetic permeability for the cloaking medium.

\subsection{Virtually reshaping electromagnetic obstacles by cloaking}

We consider the virtual reshaping for electromagnetic obstacles by
cloaking. Let $M=\bigcup_{l=1}^{m} M_l$, $M'=\bigcup_{l=1}^{m}
M_l'$, $\widetilde{M}=\bigcup_{l=1}^m \widetilde{M}_l$ and $F_l,
l=1,2,\ldots,m$ be those introduced in Section~\ref{sect:general
acoustic reshaping}. Furthermore, we assume that $F_l$ is
normal-preserving in the sense that
\[
\tilde{\nu}_l=\nu_l\circ F_l,\quad l=1,2,\ldots,m,
\]
where $\tilde{\nu}_l$ and $\nu_l$ are, respectively, the Euclidean
normals to $\partial\widetilde{M}_l$ and $\partial M_l$. E.g., if
$\widetilde{M}_l$ and $M_l$ are both star-shaped w.r.t. the origin,
say $\partial\widetilde{M}_l=\tilde{r}(\theta)\theta$ and $\partial
M_l=r(\theta)\theta$ with $\tilde{r}/r=c$ being some constant, then
$F_l$ is normal-preserving since one has
\[
\tilde{\nu}_l|_{\tilde{r}(\theta)\theta}=\nu_l|_{r(\theta)\theta}=\frac{r(\theta)\theta-\mbox{Grad}
r}{\sqrt{r^2+|\mbox{Grad} r|^2}}.
\]
Particularly, if $\widetilde{M}_l$ is $l^p$-ball shaped, the
transformation of the following form
\[
F(y)=(a+b\|y\|_p)\frac{y}{\|y\|_p},
\]
is normal-preserving, which transforms an $l^p$-ball of radius
$\tilde r$ into another $l^p$-ball of radius $r=a+b\tilde r$.

Concerning the virtual reshaping, we have

\begin{Theorem}\label{thm:magnetic reshaping}
The cloaking device $M'$ virtually reshapes the obstacle $M$ to
$\widetilde{M}$. That is,
\[
E_\infty(\theta',k,p,\theta; M\bigoplus(M'\backslash M,
g'))=E_\infty(\theta',k,p,\theta; \widetilde{M}),
\]
where $g'$ is $g_l$ in $M_l'\backslash M_l$.
\end{Theorem}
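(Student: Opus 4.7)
The strategy is to transport the scattering solution for $\widetilde M$ through the diffeomorphisms $F_l$ to the physical side, mirroring the acoustic proof of Theorem~\ref{thm:acoustic reshaping} but phrased in the language of differential forms so that the diffeomorphism invariance of Maxwell's system $dE=ik*_g H$, $dH=-ik*_g E$ can be used directly, together with the transformation rule $F_*(*_{g_0}\alpha)=*_{F_*g_0}(F_*\alpha)$ for the Hodge star and $F_*\circ d=d\circ F_*$ for the exterior derivative.

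First I would let $(\widetilde E,\widetilde H)$ be the unique solution of the Maxwell scattering problem on $\mathbb R^3\setminus\widetilde M$ in vacuum, with the prescribed PEC (respectively PMC) condition on every $\partial\widetilde M_l$, the given incident plane wave, and the Silver--M\"uller radiation condition. Since the enclosing sets $M_l'$ are pairwise disjoint and each $F_l$ is the identity outside $M_l'$, I can glue the $F_l$ together into a single bi-Lipschitz diffeomorphism $F:\mathbb R^3\setminus\widetilde M\to\mathbb R^3\setminus M$ by setting $F=F_l$ on $M_l'\setminus\widetilde M_l$ and $F=\mathrm{id}$ elsewhere; the push-forward metric $F_*g_0$ then coincides with the globally defined $g'$ of the statement.

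Next I would set $E=F_*\widetilde E$ and $H=F_*\widetilde H$, interpreting $E$ and $H$ as differential 1-forms, and verify by the naturality relations above that $(E,H)$ satisfies $dE=ik*_{g'}H$ and $dH=-ik*_{g'}E$ on $\mathbb R^3\setminus M$ and lies in $H_{loc}(\mbox{curl};M^+)\oplus H_{loc}(\mbox{curl};M^+)$; the Lipschitz regularity of $F$ and $F^{-1}$ is enough to carry this out, just as in the acoustic case. Since $F$ equals the identity outside $M'$, $(E,H)$ agrees with $(\widetilde E,\widetilde H)$ there, so the prescribed incident field and the Silver--M\"uller radiation condition are inherited unchanged, and the electric far-field pattern of $(E,H)$ is identical to that of $(\widetilde E,\widetilde H)$.

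The delicate point is the transport of the boundary condition. Intrinsically, the tangential trace $\iota^*_{\partial\widetilde M_l}\widetilde E=0$ transfers directly to $\iota^*_{\partial M_l}E=0$ by naturality of pullback; however, the PEC condition is stated in the paper with the ambient Euclidean normal as $\nu_l\times E|_{\partial M_l}=0$, not as an intrinsic pullback, so to rewrite the virtual condition $\tilde\nu_l\times\widetilde E|_{\partial\widetilde M_l}=0$ into its physical counterpart along $F_l$ I need $\tilde\nu_l$ and $\nu_l$ to be compatible under $F_l$ — and this is precisely the role played by the normal-preserving hypothesis $\tilde\nu_l=\nu_l\circ F_l$. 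Once this identification is in place, uniqueness of the Maxwell scattering problem on $(\mathbb R^3\setminus M,g')$ identifies $(E,H)$ with the scattering solution of the cloaking device $M\bigoplus(M'\setminus M,g')$, and the identity $E_\infty(\theta',k,p,\theta;M\bigoplus(M'\setminus M,g'))=E_\infty(\theta',k,p,\theta;\widetilde M)$ follows from the coincidence of the two solutions outside $M'$.
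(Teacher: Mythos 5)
Your proposal is correct and follows essentially the same route as the paper: glue the $F_l$ into a global bi-Lipschitz map $F$ that is the identity off $M'$, push forward the virtual solution $(\widetilde E,\widetilde H)$ to $(\hat E,\hat H)=(F_*\widetilde E,F_*\widetilde H)$, invoke the form-invariance of Maxwell's equations and the normal-preserving hypothesis to transfer the PEC/PMC boundary condition, and then conclude by uniqueness and the agreement of the two solutions outside $M'$. Your explicit remark that the tangential trace of a 1-form pulls back naturally, while the Euclidean-normal formulation of the boundary condition forces the normal-preserving hypothesis, is a clear restatement of the step the paper passes over quickly.
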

\begin{proof}
Let $F: \mathbb{R}^3\backslash
\widetilde{M}\mapsto\mathbb{R}^3\backslash M$ be such that
$F|_{M_l'\backslash \widetilde{M}_l}=F_l|_{M_l'\backslash
\widetilde{M}_l}, l=1,2,\ldots,m$ and $F=id$ over
$\mathbb{R}^3\backslash M'$. Let $(E,H)\in H_{loc}(\mbox{curl};
\widetilde{M}^+)\oplus H_{loc}(\mbox{curl};$ $\widetilde{M}^+)$ be
the unique scattering solution corresponding to the perfect
conducting obstacle $\widetilde{M}$. Define $\hat{E}=F_* E$ and
$\hat H=F_* H$. Clearly, $(\hat{E},\hat H)\in H_{loc}(\mbox{curl};
M^+)\oplus H_{loc}(\mbox{curl}; M^+)$ according to our requirements
on the mappings $F_l$'s, $l=1,2,\ldots,m$. Moreover, noting $F_l$'s,
$1\leq l\leq m$, are normal-preserving, we know $\nu\times
\hat{E}|_{\partial M}=0$ (resp. $\nu\times \hat{H}|_{\partial M}=0$)
if $\widetilde{M}$ is a perfectly electric conducting obstacle
(resp. perfectly magnetic conducting obstacle). Hence, $(\hat E,
\hat H)$ is the unique solution corresponding to the cloaking device
$M\bigoplus(M'\backslash M, g')$. Therefore, we have
\[
E_\infty(\theta',k,p,\theta; M\bigoplus(M'\backslash M,
g'))=E_\infty(\theta',k,p,\theta; \widetilde{M}).
\]
\end{proof}

With Theorem~\ref{thm:magnetic reshaping}, all the virtual
minification and magnification results for acoustic obstacle
scattering can be straightforwardly extended to the electromagnetic
obstacle scattering. In order to obtain similar invisibility results
for a perfectly conducting obstacle when some of its star-shaped
components are perfectly cloaked, we need a lemma similar to
Lemma~\ref{lem:justification1} in the following for electromagnetic
scattering.

\begin{Lemma}\label{lem:justification2}
Let $M_1$, $\mathbb{B}_0$ and $\mathbb{B}$ be the same as those in
Lemma~\ref{lem:justification1}, then we have
\begin{equation}\label{eq:est2}
E_\infty(\theta',k,p,\theta; M_1\cup
\mathbb{B})=E_\infty(\theta',k,p,\theta;
M_1)+\mathcal{O}(\delta)\quad \mbox{as $\delta\rightarrow +0$}.
\end{equation}
\end{Lemma}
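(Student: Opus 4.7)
The plan is to mirror the proof of Lemma~\ref{lem:justification1}, replacing the acoustic combined single/double-layer representation by its electromagnetic analogue. First I would represent the scattered electromagnetic field corresponding to the scatterer $M_1\cup\mathbb{B}$ using a combined magnetic/electric dipole distribution of the form
\[
E^s(x)=\mathrm{curl}\int_{\partial M_1}\Phi(x,y)\,a_1(y)\,ds(y)+i\,\mathrm{curl}\,\mathrm{curl}\int_{\partial M_1}\Phi(x,y)\,(\nu\times a_1)(y)\,ds(y)
\]
plus an analogous pair of integrals on $\partial\mathbb{B}$ with a coupling parameter $\eta$, so that the boundary conditions $\nu\times E=0$ on $\partial M_1\cup\partial\mathbb{B}$ are equivalent to a uniquely solvable system of two vector integral equations for tangential densities $a_1\in T(\partial M_1)$ and $a_2\in T(\partial\mathbb{B})$ (this is the Maue/combined-field formulation described in Colton--Kress). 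The magnetic field $H^s$ is recovered from $E^s$ via $H^s=(ik)^{-1}\mathrm{curl}\,E^s$.

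Then I would imitate the acoustic splitting. Writing the system schematically as
\[
\begin{pmatrix} I+\mathcal{M}_1 & \mathcal{C}_{12}(\eta)\\ \mathcal{C}_{21} & I+\mathcal{M}_2(\eta)\end{pmatrix}\begin{pmatrix}a_1\\ a_2\end{pmatrix}=\begin{pmatrix}f_1\\ f_2\end{pmatrix},
\]
with $f_j$ the tangential trace of the incident field on the respective boundary, I would pick $\eta=\delta^{-1}$ and derive the operator estimates
\[
\|\mathcal{C}_{12}(\eta)\|_{C(\partial\mathbb{B})\to C(\partial M_1)}=\mathcal{O}(\delta),\qquad\|\mathcal{C}_{21}\|_{C(\partial M_1)\to C(\partial\mathbb{B})}=\mathcal{O}(1),
\]
which follow from the fact that $\Phi(x,y)$ and its derivatives are smooth and uniformly bounded in $\delta$ when $x,y$ belong to well-separated boundaries. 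Rescaling $y=\delta y'$ on $\partial\mathbb{B}=\delta\partial\mathbb{B}_0$ and expanding $e^{ik\delta|x'-y'|}$ in powers of $\delta$, I would obtain, analogous to (\ref{eq:relation2}),
\[
\bigl\|\tfrac{1}{\delta}\mathcal{M}_2(\eta)-\mathcal{M}_2^{(0)}\bigr\|_{C(\delta\partial\mathbb{B}_0)\to C(\partial\mathbb{B}_0)}=\mathcal{O}(\delta),
\]
where $\mathcal{M}_2^{(0)}$ is the corresponding electrostatic/magnetostatic boundary operator on $\partial\mathbb{B}_0$. Bounded invertibility of $I+\mathcal{M}_2^{(0)}$ on $\partial\mathbb{B}_0$ is a standard fact for the exterior perfect-conductor problem at zero frequency on $C^2$ (or Lipschitz, by McLean) boundaries, so Neumann-series arguments yield
\[
a_1=\tilde a_1+\mathcal{O}(\delta),\qquad a_2=\mathcal{O}(1),
\]
where $\tilde a_1$ is the density corresponding to the single-obstacle problem for $M_1$ alone.

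Finally I would insert these asymptotics into the Stratton--Chu far-field representation for $E_\infty$, noting that the contribution coming from the $\partial\mathbb{B}$ integral, after changing variables to $\partial\mathbb{B}_0$, carries an overall factor $\delta$ from the surface measure (possibly modulated by the coupling $\eta=\delta^{-1}$, but the structure of the double-curl kernel still leaves one surplus power of $\delta$, just as in the acoustic case). This yields
\[
E_\infty(\theta',k,p,\theta; M_1\cup\mathbb{B})=E_\infty(\theta',k,p,\theta; M_1)+\mathcal{O}(\delta),
\]
which is (\ref{eq:est2}). The main obstacle, in my view, is the bookkeeping of scaling exponents for the vector double-curl kernel together with the coupling $\eta=\delta^{-1}$: one has to confirm that, on rescaling $\partial\mathbb{B}\to\partial\mathbb{B}_0$, the extra derivative in the curl-curl does not destroy the $\mathcal{O}(\delta)$ budget on the right-hand side. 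Choosing the coupling differently (e.g.\ $\eta=\delta^{-1}$ versus $\eta$ fixed) should be done exactly to balance these factors, in full analogy with the scalar case. The PMC case is handled by duality, interchanging the roles of $E$ and $H$.
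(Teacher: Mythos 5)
Your proposal takes the same route the paper does: set up a combined-field boundary integral equation for the two-component perfect conductor, choose $\eta=\delta^{-1}$, split off the cross-coupling operators, and repeat the $\delta$-asymptotics of Lemma~\ref{lem:justification1} term by term. Indeed the paper's own proof is only a sketch, ending with ``a similar asymptotic analysis to that implemented in the proof of Lemma~\ref{lem:justification1} \dots\ completes the proof,'' so you have reproduced its structure faithfully.

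There is, however, one technical choice you made differently that matters precisely at the point where you yourself express doubt. You write the electric-dipole part of the ansatz as
\[
i\,\mathrm{curl}\,\mathrm{curl}\!\int_{\partial M_1}\Phi(x,y)\,(\nu\times a_1)(y)\,ds(y),
\]
which leads, upon taking the tangential trace, to the hypersingular operator $\mathcal{N}$ applied directly to $a_1$; this operator is of order $1$ and is not compact on $T^{0,\alpha}$, so the second-kind Riesz--Fredholm and Neumann-series machinery that drives the acoustic asymptotics does not carry over unchanged, and the ``extra derivative'' scaling worry you flag is a real one in this formulation. The paper instead uses the Panič-type regularized ansatz of Colton--Kress, with density $\nu\times(S_0^2 a)$ in place of $\nu\times a$; since $S_0^2$ is a smoothing operator of order two, the resulting boundary operator $\mathcal{N}PS_0^2$ is of order zero and compact, the system remains a compact perturbation of the identity on $T^{0,\alpha}_d$, and the rescaling $\partial\mathbb{B}\to\partial\mathbb{B}_0$ produces exactly the same $\mathcal{O}(\delta)$ budget as the scalar $S_2,K_2$ estimates in \eqref{eq:relation1}--\eqref{eq:relation2}. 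In short, the $S_0^2$ regularization is not a cosmetic detail here: it is the specific device that neutralizes the curl-curl derivative-count problem you raised and lets the acoustic bookkeeping transfer verbatim. If you insist on the unregularized ansatz, you would need to reframe the argument in $H^{-1/2}(\mathrm{Div},\partial\mathbb{B})$ and redo the scaling estimates there, which is a genuinely different and heavier piece of work.
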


\begin{proof}
We first introduce the space $T^{0,\alpha}(\partial M_1),
0<\alpha\leq 1$, consisting of the uniformly H\"older continuous
tangential fields $a$ equipped with the H\"older norm, and
$T^{0,\alpha}_d(\partial M_1)=\{a\in T^{0,\alpha}(\partial M_1);
\mbox{Div}\, a\in C^{0,\alpha}(\partial M_1)\}$. Similarly, one can
introduce $T^{0,\alpha}_d(\partial M_1\cup \partial\mathbb{B})$. We
know the solution $(E,H)\in
C^{0,\alpha}(\mathbb{R}^3\backslash(M_1\cup \mathbb{B}))\oplus
C^{0,\alpha}(\mathbb{R}^3\backslash(M_1\cup \mathbb{B}))$ and can be
expressed (cf. \cite{ColKre})
\begin{align*}
E(x)=E^i(x)&+\mbox{curl}\int_{\partial M_1} a_1(y)\Phi(x,y)
ds(y)+i\int_{\partial M_1}\nu(y)\times (S_0^2 a)(y)\Phi(x,y)ds(y)\\
&+\mbox{curl}\int_{\partial \mathbb{B}} a_2(y)\Phi(x,y)
ds(y)+i\eta\int_{\partial \mathbb{B}}\nu(y)\times (S_0^2
a_2)(y)\Phi(x,y)ds(y)
\end{align*}
and $H(x)=\mbox{curl} E(x)/ik$, where $\eta\neq 0$ is a real
coupling parameter. Here, $S_0$ is the operator as defined in the
proof of Lemma~\ref{lem:justification1} but with the integration
domains changed according to the context and the densities $a_1\in
T_d^{0,\alpha}(\partial M_1), a_2\in
T_d^{0,\alpha}(\partial\mathbb{B})$ satisfy
\begin{equation}\label{eq:l1}
(a+\mathcal{M}_1a_1+i\mathcal{N}_1PS_0^2 a_1+\mathcal{M}_2 a_2 +
i\eta\mathcal{N}_2 P S_0^2 a_2)(x)=V(x),\ \ x\in\partial
M_1\cup\partial\mathbb{B}
\end{equation}
where $a(x):=a_1(x)$ for $x\in\partial M_1$ and $a(x):=a_2(x)$ for
$x\in\partial \mathbb{B}$, and $V(x):=-2 \nu\times E^i(x)$ for PEC
obstacle and $V(x):=-2 \nu\times H^i(x)$ for PMC obstacle. The
operators involved in (\ref{eq:l1}) are respectively given by
\begin{align*}
(\mathcal{M}_1a_1)(x):=& 2\int_{\partial M_1}\nu(x)\times
\mbox{curl}_x\{a_1(y)\Phi(x,y)\}\ ds(y),\\
(\mathcal{N}_1b_1)(x):=& 2 \nu(x)\times\mbox{curl}\
\mbox{curl}\int_{\partial M_1}\nu(y)\times
b_1(y)\Phi(x,y)\ ds(y),\\
(\mathcal{M}_2a_2)(x):=& 2\int_{\partial \mathbb{B}}\nu(x)\times
\mbox{curl}_x\{a_2(y)\Phi(x,y)\}\ ds(y),\\
(\mathcal{N}_2b_2)(x):=& 2 \nu(x)\times\mbox{curl}\
\mbox{curl}\int_{\partial \mathbb{B}}\nu(y)\times
b_2(y)\Phi(x,y)\ ds(y),\\
P c:=& (\nu\times c)\times c.
\end{align*}
We again refer to \cite{ColKre2,ColKre} for relevant mapping
properties of the above operators. Finally, a similar asymptotic
analysis to that implemented in the proof of
Lemma~\ref{lem:justification1}, one can complete the proof.
\end{proof}

Clearly, with Lemma~\ref{lem:justification2}, we have similar
invisibility result for electromagnetic scattering as those remarked
in Remark~\ref{rem:invisibility} for acoustic scattering.

\section*{Acknowledgement}

The author would like to thank Professor Gunther Uhlmann of the
Department of Mathematics, University of Washington for a lot of
stimulating discussions. He would also like to thank the
constructive comments from two anonymous referees, which have led to
significant improvements of the results of this paper. The work is
partly supported by NSF grant, FRG DMS 0554571.

\end{document}